\algnewcommand\algorithmicparfor{\textbf{for}}
\algnewcommand\algorithmicpardo{\textbf{do\ parallel}}
\algnewcommand\algorithmicendparfor{\textbf{end\ parallel\ for}}
\DeclareMathOperator*{\argmax}{\arg\!\max}
\newcommand{\distas}[1]{\mathbin{\overset{#1}{\kern\z@\sim}}}%
\newcommand{\bm}[1]{\mathbf{#1}}
\newsavebox{\mybox}\newsavebox{\mysim}
\newcommand{\distras}[1]{%
  \savebox{\mybox}{\hbox{\kern3pt$\scriptstyle#1$\kern3pt}}%
  \savebox{\mysim}{\hbox{$\sim$}}%
  \mathbin{\overset{#1}{\kern\z@\resizebox{\wd\mybox}{\ht\mysim}{$\sim$}}}%
}
\newtheorem{theorem}{Theorem}
\newcommand{\be}{\begin{equation}}
\newcommand{\ee}{\end{equation}}
\newcommand{\bi}{\begin{itemize}}
\newcommand{\ei}{\end{itemize}}
\newcommand{\ben}{\begin{enumerate}}
\newcommand{\een}{\end{enumerate}}
\newcommand{\stb}{\State $\bullet$ \;}
\newcolumntype{K}[1]{>{\centering\arraybackslash}p{#1}}
\DeclareMathOperator*{\argmin}{\arg\!\min}
\let\oldbibliography\thebibliography
\renewcommand{\thebibliography}[1]{\oldbibliography{#1}
\setlength{\itemsep}{0pt}} 
\newcommand{\blind}{1}
\patchcmd{\footnotemark}{\stepcounter{footnote}}{\refstepcounter{footnote}}{}{}
\newcounter{savecntr}
\newcounter{restorecntr}
\begin{document}

\def\spacingset#1{\renewcommand{\baselinestretch}%
{#1}\small\normalsize} \spacingset{1}

\if1\blind
{
  \title{\bf An efficient surrogate model for emulation and physics extraction of large eddy simulations}
  \small
  \author{Simon Mak $^{\text{\textsection}}$ \setcounter{savecntr}{\value{footnote}}\thanks{Joint first authors}\;, Chih-Li Sung\setcounter{restorecntr}{\value{footnote}} $^{\text{\textsection}}$%
  \setcounter{footnote}{\value{savecntr}}\footnotemark
  \setcounter{footnote}{\value{restorecntr}}
   \hspace{.2cm}\\
    and \\		
    Xingjian Wang\setcounter{savecntr}{\value{footnote}}\thanks{School of Aerospace Engineering, Georgia Institute of Technology}, Shiang-Ting Yeh\setcounter{restorecntr}{\value{footnote}}%
  \setcounter{footnote}{\value{savecntr}}\footnotemark
  \setcounter{footnote}{\value{restorecntr}}, Yu-Hung Chang\setcounter{restorecntr}{\value{footnote}}%
  \setcounter{footnote}{\value{savecntr}}\footnotemark
  \setcounter{footnote}{\value{restorecntr}}\\
  and \\
  V. Roshan Joseph$^{\text{\textsection}}$, Vigor Yang$^{\dagger}$, C. F. Jeff Wu\footnote{Corresponding author} \footnote{School of Industrial and Systems Engineering, Georgia Institute of Technology}
}
  \maketitle
} \fi

\if0\blind
{
  \bigskip
  \bigskip
  \bigskip
  \begin{center}
    {\LARGE\bf An efficient surrogate model of large eddy simulations for design evaluation and physics extraction}
\end{center}
  \medskip
} \fi

\bigskip

\vspace{-0.5cm}
\begin{abstract}

In the quest for advanced propulsion and power-generation systems, high-fidelity simulations are too computationally expensive to survey the desired design space, and a new design methodology is needed that combines engineering physics, computer simulations and statistical modeling. In this paper, we propose a new surrogate model that provides efficient prediction and uncertainty quantification of turbulent flows in swirl injectors with varying geometries, devices commonly used in many engineering applications. The novelty of the proposed method lies in the incorporation of known physical properties of the fluid flow as {simplifying assumptions} for the statistical model. In view of the massive simulation data at hand, which is on the order of hundreds of gigabytes, these assumptions allow for accurate flow predictions in around an hour of computation time. To contrast, existing flow emulators which forgo such simplications may require more computation time for training and prediction than is needed for conducting the simulation itself. Moreover, by accounting for coupling mechanisms between flow variables, the proposed model can jointly reduce prediction uncertainty and extract useful flow physics, which can then be used to guide further investigations.
\end{abstract}

\noindent%
{\it Keywords:} Computer experiments; sparsity; kriging; rocket injectors; spatio-temporal flow; turbulence.
\vfill

\newpage
\spacingset{1.45} 

\section{Introduction}\label{sec:intro}

In the quest for designing advanced propulsion and power-generation systems, there is an increasing need for an effective methodology that combines engineering physics, computer simulations and statistical modeling. A key point of interest in this design process is the treatment of turbulence flows, a subject that has far-reaching scientific and technological importance \citep{McC1990}. Turbulence refers to the irregular and chaotic behavior resulting from motion of a fluid flow \citep{Pop2001}, and is characterized by the formation of eddies and vortices which transfer flow kinetic energy due to rotational dynamics. Such a phenomenon is an unavoidable aspect of everyday life, present in the earth's atmosphere and ocean waves, and also in chemically reacting flows in propulsion and power-generation devices. In this paper, we develop a surrogate model, or emulator, for predicting turbulent flows in a swirl injector, a mechanical component with a wide variety of engineering applications.

There are two reasons why a statistical model is required for this important task. First, the time and resources required to develop an effective engineering device with desired functions may be formidable, even at a \textit{single} design setting. Second, even with the availability of high-fidelity simulation tools, the computational resources needed can be quite costly, and only a handful of design settings can be treated in practical times. {For example, the flow simulation of a single injector design takes over 6 days of computation time, parallelized using 200 CPU cores.} For practical problems with large design ranges and/or many design inputs, the use of only high-fidelity simulations is insufficient for surveying the full design space. In this setting, emulation provides a powerful tool for efficiently predicting flows at any design geometry, using a small number of flow simulations as training data. A central theme of this paper is that, by properly \textit{eliciting} and \textit{applying} physical properties of the fluid flow, simplifying assumptions can be made on the emulator which greatly reduce computation and improve prediction accuracy. In view of the massive simulation datasets, which can exceed many gigabytes or even terabytes in storage, such efficiency is paramount for the usefulness of emulation in practice. 

The proposed emulator utilizes a popular technique called \textit{kriging} \citep{Mat1963}, which employs a Gaussian Process (GP) for modeling computer simulation output over a desired input domain. The main appeal of kriging lies in the fact that both the emulation predictor and its associated uncertainty can be evaluated in closed-form. For our application, a kriging model is required which can predict {flows} at any injector geometry setting; we refer to this as \textit{flow kriging} for the rest of the paper. In recent years, there have been important developments in flow kriging, including the works of \cite{Wea2006} and \cite{Rou2008} on {regular spatial grids} (i.e., outputs are observed at the same spatial locations over all simulations), and \cite{Hea2015} on irregular grids. Unfortunately, it is difficult to apply these models to the more general setting in which the \textit{dimensions} of spatial grids vary greatly for different input variables. In the present work, for instance, the desired design range for injector length varies from 20 mm to 100 mm. Combined with the high spatial and temporal resolutions required in simulation, the resulting flow data is much too large to process using existing models, and data-reduction methods are needed.


There has been some work on using reduced-basis models to compact data for emulation, including the functional linear models by \cite{Fea2006}, wavelet models by \cite{Bea2007} and principal component models by \cite{RS2002} and \cite{Hea2007}. Here, we employ a generalization of the latter method called \textit{proper orthogonal decomposition (POD)} \citep{Lum1967}, which is better known in statistical literature as the Karhunen-Lo\`eve decomposition \citep{Kar1947,Loe1955}. From a flow physics perspective, POD separates a simulated flow into key instability structures, each with its corresponding spatial and dynamic features. Such a decomposition is, however, inappropriate for emulation, because there is no way to connect the extracted instabilities of one input setting to the instabilities of another setting. To this end, we propose a new method called the \textit{common POD} (CPOD) to extract \textit{common} instabilities over the design space. This technique exploits a simple and physically justifiable linearity assumption on the spatial distribution of instability structures.

In addition to efficient flow emulation, our model also provides two important features. First, the same domain-specific model simplications (e.g., on the spatio-temporal correlation structure) which enable efficient prediction also allow for an efficient uncertainty quantification (UQ) for such a prediction. This UQ is highly valuable in practice, since the associated uncertainties for variable disturbance propagations can then be used for mitigating flow instabilities \citep{Yea2013}. Second, by incorporating known properties of the fluid flow into the model, the proposed emulator can in turn provide valuable insights on the dominant physics present in the system, which can then be used to guide further scientific investigations. One key example of this is the learning of dominant flow coupling mechanisms using a large co-kriging model \citep{SC1991,Bea2014} under sparsity constraints.

The paper is structured as follows. Section \ref{sec:data} provides a brief overview of the physical model of concern, including injector design, governing equations and experimental design. Section \ref{sec:method} introduces the proposed emulator model, and proposes a parallelized algorithm for efficient parameter estimation. Section \ref{sec:result} presents the emulation prediction and UQ for a new injector geometry, and interprets important physical correlations extracted by the emulator. Section 5 concludes with directions for future work.

\section{Injector schematic and large eddy simulations}\label{sec:data}

We first describe the design schematic for the swirl injector of concern, then briefly outline the governing partial differential equations and simulation tools. A discussion on experimental design is provided at the end of this section.

\begin{table}[t]
\begin{minipage}{0.53\textwidth}
\centering
\includegraphics[width=\textwidth]{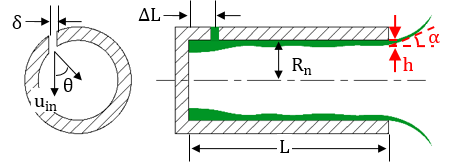}
\captionof{figure}{Schematic of injector configuration.}
\label{fig:inj}
\end{minipage}
\hfill
\begin{minipage}{0.43\textwidth}
\centering
\begin{tabular}{cc}
\toprule
\text{\bf{Parameter}} & \text{\bf{Range}}\\
\toprule
\text{$L$} & 20 mm - 100 mm\\
\text{$R_n$} & 2.0 mm - 5.0 mm\\
\text{$\delta$} & 0.5 mm - 2.0 mm\\
\text{$\theta$} & $45^{\circ} - 75^{\circ}$\\
\text{$\Delta L$} & 1.0 mm - 4.0 mm\\
\toprule
\end{tabular}
\caption{Range of geometric parameters.}
\label{tbl:range}
\end{minipage}
\end{table}

\subsection{Injector design}

Figure \ref{fig:inj} shows a schematic of the swirl injector under consideration. It consists of an open-ended cylinder and a row of tangential entries for liquid fluid injection. The configuration is typical of many propulsion and power-generation applications \citep{ZY2008, WY2016, Wea2017}. Liquid propellant is tangentially introduced into the injector and forms a thin film attached to the wall due to the swirl-induced centrifugal force. A low-density gaseous core exists in the center region in accordance with conservation of mass and angular momentum. The liquid film exits the injector as a thin sheet and mixes with the ambient gas. The swirl injection and atomization process involves two primary mechanisms: disintegration of the liquid sheet as it swirls and stretches, and sheet breakup due to the interaction with the surroundings. The design of the injector significantly affects the atomization characteristics and stability behaviors.

{Figure} \ref{fig:inj} shows the five design variables considered for injector geometry: the injector length $L$, the nozzle radius $R_n$, the inlet diameter $\delta$, the injection angle $\theta$, and the distance between inlet and head-end $\Delta L$. From flow physics, these five variables are influential for liquid film thickness $h$ and spreading angle $\alpha$ (see Figure \ref{fig:inj}), which are key measures of injector performance of a swirl injector. For example, a larger injection angle $\theta$ induces greater swirl momentum in the liquid oxygen flow, which in turn causes thinner film thickness and smaller spreading angle. Table \ref{tbl:range} summarizes the design ranges for these five variables. To ensure the applicability of our work, broad geometric ranges are considered, covering design settings for several existing rocket injectors. Specifically, the range for injector length $L$ covers the length of RD-0110 and RD-170 liquid-fuel rocket engines.

\subsection{Flow simulation}

The numerical simulations here are performed with a pressure of 100 atm, which is typical of contemporary liquid rocket engines with liquid oxygen (LOX) as the propellant. The physical processes modeled here are turbulent flows, in which various sizes of turbulent eddies are involved. A direct numerical simulation to resolve all eddy length-scales is computationally prohibitive. To this end, we employ the large eddy simulation (LES) technique, which directly simulates large turbulent eddies and employs a model-based approach for small eddies. To provide initial turbulence, broadband Gaussian noise is superimposed onto the inlet velocity components. Thermodynamic and transport properties are simulated using the techniques in \cite{Hea2014} and \cite{Wea2015}; the theoretical and numerical framework can be found in \cite{OY1998} and \cite{Zea2004}. To optimize computational speed, a multi-block domain decomposition technique combined with the message-passing interface for parallel computing is applied. Each LES simulation takes 6 days of computation time, parallelized over 200 CPU cores, to obtain $T = 1,000$ snapshots with a time-step of 0.03 ms after the flow reaches statistically stationary state. From this, six flow variables of interest can be extracted: axial ($u$), radial ($v$), and circumferential ($w$) components of velocity, temperature ($T$), pressure ($P$) and density ($\rho$).

Numerical simulations are conducted for $n=30$ injector geometries in the timeframe set for this project. These simulation runs are allocated over the design space in Table \ref{tbl:range} using the maximum projection (MaxPro) design proposed by \cite{Jea2015}. Compared to Latin-hypercube-based designs (e.g., \citealp{Mea1979}, \citealp{MM1995}), MaxPro designs enjoy better space-filling properties in {all} possible projections of the design space, and also provide better predictions for GP modeling. While $n=30$ simulation runs may appear to be too small of a dataset for training the proposed flow emulator, we show this sample size can provide accurate flow predictions for the application at hand, through an elicitation of flow physics and the incorporation of such physics into the model. For these 30 runs, one issue which arises is that the simulation data is massive, requiring nearly a hundred gigabytes in computer storage. For such large data, a blind application of existing flow kriging methods may require weeks for flow prediction, which entirely defeats the purpose of emulation, because simulated flows can generated in 6 days. Again, by properly eliciting and incorporating physics as simplifying assumptions for the emulator model, accurate flow predictions can be achieved in hours despite a limited run size. We elaborate on this elicitation procedure in the following section.



\section{Emulator model}\label{sec:method}

\begin{table}[t]
\centering
\begin{tabular}{K{0.6\linewidth} | K{0.38\linewidth}}
\toprule
\textbf{Flow physics} & \textbf{Model assumption}\\
\toprule
Coherent structures in turbulent flow \citep{Lum1967} & POD-based kriging\\
\hline
Similar Reynolds numbers for cold-flows \citep{Sto1851} & Linear-scaling modes in CPOD \\
\hline
Dense simulation time-steps & Time-independent emulator\\
\hline
Couplings between flow variables \citep{Pop2001} & Co-kriging framework with covariance matrix $\bm{T}$\\
\hline
Few-but-significant couplings \citep{Pop2001} & Sparsity on $\bm{T}^{-1}$\\
\toprule
\end{tabular}
\caption{Elicited flow physics and corresponding assumptions for the emulator model.}
\label{tbl:model}
\end{table}

We first introduce the new idea of CPOD, then present the proposed emulator model and a parallelized algorithm for parameter estimation. A key theme in this section (and indeed, for this paper) is the elicitation and incorporation of flow physics within the emulator model. This not only allows for {efficient} and {accurate} flow predictions through simplifying model assumptions, but also provides a data-driven method for {extracting} useful flow physics, which can then guide future experiments. As demonstrated in Section 4, both objectives can be achieved despite limited runs and complexities inherent in flow data. Table \ref{tbl:model} summarizes the elicited flow physics and the corresponding emulator assumptions; we discuss each point in greater detail below.

\subsection{Common POD}
A brief overview of POD is first provided, following \cite{Lum1967}. For a \textit{fixed} injector geometry, let $Y(\bm{x},t)$ denote a flow variable (e.g., pressure) at spatial coordinate $\bm{x}\in \mathbb{R}^2$ and flow time $t$. POD provides the following decomposition of $Y(\bm{x},t)$ into separable spatial and temporal components:
\begin{equation}
Y(\bm{x},t) =\sum_{k=1}^\infty \beta_k(t) \phi_k(\bm{x}), 
\label{eq:klexp}
\end{equation}
with the spatial eigenfunctions $\{\phi_k(\bm{x})\}_{k=1}^\infty$ and temporal coefficients $\{\beta_k(t)\}_{k=1}^\infty$ given by:
\begin{align}
\begin{split}
\phi_k(\bm{x}) = \argmax_{\substack{\| \psi \|_2 = 1, \\ \langle \psi, \phi_l \rangle = 0, \forall l < k}} \int \left\{ \int Y(\bm{x},t) \psi(\bm{x}) \; d\bm{x}\right\}^2 \; dt, \quad
\beta_k(t) = \int Y(\bm{x},t) \phi_k(\bm{x}) \; d\bm{x}.
\end{split}
\end{align}
Following \cite{Bea1993}, we refer to $\{\phi_k(\bm{x})\}_{k=1}^\infty$ as the \textit{spatial POD modes} for $Y(\bm{x},t)$, and its corresponding coefficients $\{\beta_k(t)\}_{k=1}^\infty$ as \textit{time-varying coefficients}.

There are two key reasons for choosing POD over other reduced-basis models. First, one can show \citep{Loe1955} that any truncated representation in \eqref{eq:klexp} gives the best flow reconstruction of $Y(\bm{x},t)$ in $L_2$-norm, compared to any other linear expansion of space/time products with the same number of terms. This property is crucial for our application, since it allows the massive simulation data to be optimally reduced to a smaller training dataset for the proposed emulator. Second, the POD has a special interpretation in terms of turbulent flow. In the seminal paper by \cite{Lum1967}, it is shown that, under certain conditions, the expansion in \eqref{eq:klexp} can extract \textit{physically meaningful} coherent structures which govern turbulence instabilities. For this reason, physicists use POD as an experimental tool to pinpoint key flow instabilities, simply through an inspection of $\phi_k(\bm{x})$ and the dominant frequencies in $\beta_k(t)$. For example, using POD analysis, \cite{ZY2008} showed that the two flow phenomena, hydrodynamic wave propagation on LOX film and vortex core excitation near the injector exit, are the key mechanisms driving flow instability. This is akin to the use of {principal components} in regression, which can yield meaningful results in applications where such components have innate interpretability.

\begin{figure}[t]
\centering
\includegraphics[width=\textwidth]{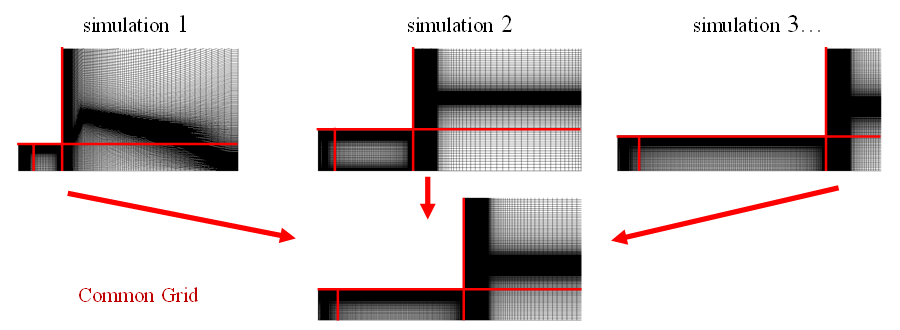}
\caption{Common grid using linearity assumption for CPOD.}
\label{fig:rescalestep1}
\end{figure}

Unfortunately, POD is only suitable for extracting instability structures at a \textit{single} geometry, whereas for emulation, a method is needed that can extract {common} structures over \textit{varying} geometries. With this in mind, we propose a new decomposition called common POD (CPOD). The key assumption of CPOD is that, under a \textit{physics-guided partition} of the computational domain, the spatial distribution of coherent structures \textit{scales linearly} over varying injector geometries. For {cold flows}, this can be justified by similar Reynolds numbers (which characterize flow dynamics) over different geometries \citep{Sto1851}. This is one instance of model simplification through elicitation, because such a property likely does not hold for general flows. This linearity assumption is highly valuable for computational efficiency, because flows from different geometries can then be rescaled onto a common spatial grid for instability extraction. Figure \ref{fig:rescalestep1} visualizes this procedure. The grids for each simulation are first split into four parts: from injector head-end to the inlet, from the inlet to the nozzle exit, and the top and bottom portions of the downstream region. Each part is then proportionally rescaled to a common, reference grid according to changes in the geometric variables $L$, $R_n$ and $\Delta L$ (see Figure \ref{fig:inj}). From a physics perspective, such a partition is necessary for the linearity assumption to hold.

Stating this mathematically, let $\bm{c}_1, \cdots, \bm{c}_n \in \mathbb{R}^p$ be the $n$ simulated geometries, let $Y(\bm{x},t;\bm{c}_i)$ be the simulated flow at setting $\bm{c}_i$, and fix some setting $\bm{c} \in \{\bm{c}_i\}_{i=1}^n$ as the geometry for the common grid. Next, define $\mathcal{M}_i:\mathbb{R}^2 \rightarrow \mathbb{R}^2$ as the linear map which rescales spatial modes on the common geometry $\bm{c}$ back to the $i$-th simulated geometry $\bm{c}_i$ according to geometric changes in $L$, $R_n$ and $\Delta L$. $\mathcal{M}_i$ can be viewed as the inverse map of the procedure described in the previous paragraph and visualized in Figure \ref{fig:rescalestep1}, which rescales modes from $\bm{c}_i$ to the common geometry $\bm{c}$ (see Appendix A.1 for details). CPOD provides the following decomposition of  $Y(\bm{x},t; \bm{c}_i)$:
\begin{equation}
Y(\bm{x},t; \bm{c}_i) = \sum_{k=1}^\infty \beta_k(t; \bm{c}_i) \mathcal{M}_i \{\phi_k(\bm{x})\},
\label{eq:cklexp}
\end{equation}
with the spatial CPOD modes $\{\phi_k(\bm{x})\}$ and time-varying coefficients $\{\beta_k(t;\bm{c}_i)\}$ defined as:
\small
\begin{equation}
\phi_k(\bm{x}) = \argmax_{\substack{\| \psi \|_2 = 1, \\ \langle \psi, \phi_l \rangle = 0, \forall l < k}} \sum_{i=1}^n \int \left\{ \int Y(\bm{x},t; \bm{c}_i) \mathcal{M}_i\{\psi(\bm{x})\} \; d\bm{x} \right\}^2 dt, \; \beta_k(t;\bm{c}_i) = \int Y(\bm{x},t; \bm{c}_i) \mathcal{M}_i \{ \phi_k(\bm{x}) \} \; d\bm{x}.
\end{equation}
\normalsize
Here, $\phi_k(\bm{x})$ is the spatial distribution for the $k$-th common flow structure, with $\beta_k(t;\bm{c}_i)$ its time-varying coefficient for geometry $\bm{c}_i$. As in POD, leading terms in CPOD can also be interpreted in terms of flow physics, a property we demonstrate later in Section \ref{sec:result}. CPOD therefore not only provides optimal {data-reduction} for the simulation data, but also extracts {physically meaningful} structures which can then be incorporated for emulation.

Algorithmically, the CPOD expansion can be computed by rescaling and interpolating all flow simulations to the common grid, computing the POD expansion, and then rescaling the resulting modes back to their original grids. Interpolation is performed using the inverse distance weighting method in \cite{She1968}, and can be justified by dense spatial resolution of the data (with around 100,000 grid points for each simulation). Letting $T$ be the total number of time-steps, a naive implementation of this decomposition requires $O(n^3T^3)$ work, due to a singular-value-decomposition (SVD) step. Such a decomposition therefore becomes computationally intractable when the number of runs grows large or when simulations have dense time-steps (as is the case here). To avoid this computational issue, we use an iterative technique from \cite{LS1998} called {the implicitly restarted Arnoldi method}, which approximates leading terms in \eqref{eq:cklexp} using periodically restarted Arnoldi decompositions. The full algorithm for CPOD is outlined in Appendix A.


\subsection{Model specification}
After the CPOD extraction, the extracted time-varying coefficients $\{\beta_k(t;\bm{c}_i)\}_{i,k}$ are then used as data for fitting the proposed emulator. There has been some existing work on dynamic emulator models, such as \cite{CO2010}, \cite{Cea2009} and \cite{LW2009}, but the sheer number of simulation time-steps here can impose high computation times and numerical instabilities for these existing methods \citep{Hea2015}. As mentioned previously, computational efficiency is paramount for our problem, since simulation runs can be performed within a week. Moreover, existing emulators cannot account for cross-correlations between different dynamic systems, while the flow physics represented by different CPOD modes are known to be highly coupled from governing equations. Here, we exploit the dense temporal resolution of the flow by using a \textit{time-independent (TI)} emulator that employs independent kriging models at {each slice of time}. The rationale is that, because time-scales are so fine, there is no practical need to estimate temporal correlations (even when they exist), since prediction is not required between time-steps. This time-independent simplification is key for emulator efficiency, since it allows us to fully exploit the power of parallel computing for model fitting and flow prediction.

The model is as follows. Suppose $R$ flow variables are considered (with $R=6$ in the present case), and the CPOD expansion in \eqref{eq:cklexp} is truncated at $K_r$ terms for flow $r = 1, \cdots, R$. Let $\boldsymbol{\beta}^{(r)}(t;\bm{c}) = (\beta^{(r)}_1(t;\bm{c}), \cdots, \beta^{(r)}_{K_r}(t;\bm{c}))^T$ be the vector of $K_r$ time-varying coefficients for flow variable $r$ at design setting $\bm{c}$, with $\boldsymbol{\beta}(t;\bm{c}) = (\boldsymbol{\beta}^{(1)}(t;\bm{c})^T, \cdots, \boldsymbol{\beta}^{(R)}(t;\bm{c})^T)^T$ the coefficient vector for all flows at $\bm{c}$. We assume the following \textit{time-independent GP model} on $\boldsymbol{\beta}(t;\bm{c})$:
\begin{equation}
\boldsymbol{\beta}(t;\bm{c}) \sim GP\{\boldsymbol{\mu}(t), \boldsymbol{\Sigma}(\cdot, \cdot;t)\}, \quad \boldsymbol{\beta}(t;\bm{c}) \perp \boldsymbol{\beta}(t';\bm{c}) \text{ for } t \neq t'.
\label{eq:gpcoef}
\end{equation}
Here, $K = \sum_{r=1}^R K_r$ is the number of extracted modes over all $R$ flow variables, $\boldsymbol{\mu}\in \mathbb{R}^K$ is the process mean vector, and $\boldsymbol{\Sigma}(\cdot, \cdot): \mathbb{R}^p \times \mathbb{R}^p \rightarrow \mathbb{R}^{K \times K}$ its corresponding covariance matrix function defined below. Since the GPs are now time-independent, we present the specification for \textit{fixed} time $t$, and refer to $\boldsymbol{\beta}(t;\bm{c})$, $\boldsymbol{\mu}(t)$ and $\bm{\Sigma}(\cdot,\cdot;t)$ as $\boldsymbol{\beta}(\bm{c})$, $\boldsymbol{\mu}$ and $\bm{\Sigma}(\cdot,\cdot)$ for brevity.

For computational efficiency, the following separable form is assumed for $\boldsymbol{\Sigma}(\cdot,\cdot)$:
\begin{equation}
\boldsymbol{\Sigma}(\bm{c}_1, \bm{c}_2) = r_\tau(\bm{c}_1, \bm{c}_2) \bm{T}, \quad r_\tau(\bm{c}_1, \bm{c}_2) = \prod_{j=1}^p \tau_j^{4(c_{1j} - c_{2j})^2}, \quad \bm{c}_1, \bm{c}_2 \in \mathbb{R}^p, \quad \tau_j \in (0,1),
\label{eq:gpcov}
\end{equation}
where $\bm{T} \in \mathbb{R}^{K \times K}$ is a symmetric, positive definite matrix called the \textit{CPOD covariance matrix}, and $r_\tau(\cdot,\cdot)$ is the correlation function over the design space, parameterized by $\boldsymbol{\tau} = (\tau_1, \cdots, \tau_p)^T \in (0,1)^p$. This can be viewed as a large co-kriging model \citep{SC1991} over the design space, with the multivariate observations being the extracted CPOD coefficients for all flow variables. Note that $r_{\tau}$ is a reparametrization of the squared-exponential (or Gaussian) correlation function $\exp\{-\sum_{j=1}^p \theta_j (c_{1j}-c_{2j})^2\}$, with $\theta_j = -4 \log \tau_j$. In our experience, such a reparametrization allows for a more numerically stable optimization of MLEs, because the optimization domain $\tau_j \in (0,1)$ is now bounded. Our choice of the Gaussian correlation is also well-justified for the application at hand, since fully-developed turbulence dynamics are known to be relatively smooth.

Suppose simulations are run at settings $\bm{c}_1, \cdots, \bm{c}_n$, and assume for now that model parameters are known. Invoking the conditional distribution of the multivariate normal distribution, the time-varying coefficients at a new setting $\bm{c}_{new}$ follow the distribution:
\small
\begin{align}
\begin{split}
\boldsymbol{\beta}(\bm{c}_{new})|\{\boldsymbol{\beta}(\bm{c}_i)\}_{i=1}^n \sim  \mathcal{N} \Bigg( &\boldsymbol{\mu} + 
\left( \bm{T} \otimes \bm{r}_{\tau,new}\right)^T
\left( \bm{T}^{-1} \otimes \bm{R}_{\tau}^{-1} \right)
\left(\boldsymbol{\beta} - 
\bm{1}_n
\otimes
\boldsymbol{\mu}\right), \\
& \quad \bm{T} - \left( \bm{T} \otimes \bm{r}_{\tau,new}\right)^T \left( \bm{T}^{-1} \otimes \bm{R}_{\tau}^{-1} \right) \left( \bm{T} \otimes \bm{r}_{\tau,new} \right) \Bigg),
\label{eq:coefdist}
\end{split}
\end{align}
\normalsize
where $\bm{r}_{\tau,new} = (r_\tau(\bm{c}_{new},\bm{c}_1), \cdots, r_\tau(\bm{c}_{new},\bm{c}_n))^T$ and $\bm{R}_\tau= {[r_\tau(\bm{c}_i,\bm{c}_j)]^n_{i=1}}_{j=1}^n$. Using algebraic manipulations, the minimum-MSE (MMSE) predictor for $\boldsymbol{\beta}(\bm{c}_{new})|\{\boldsymbol{\beta}(\bm{c}_i)\}_{i=1}^n$ and its corresponding variance is given by
\small 
\begin{equation}
\boldsymbol{\hat{\beta}}(\bm{c}_{new}) = \boldsymbol{\mu} +\left( (\bm{r}^T_{\tau,new}\bm{R}_{\tau}^{-1})\otimes\bm{I}_{K}\right)
\left(\boldsymbol{\beta} - 
\bm{1}_n
\otimes
\boldsymbol{\mu} \right),\mathbb{V}\{\boldsymbol{\beta}(\bm{c}_{new})|\{\boldsymbol{\beta}(\bm{c}_i)\}^n_{i=1}\} = \left(1 - \bm{r}_{\tau,new}^T \bm{R}_{\tau}^{-1} \bm{r}_{\tau,new} \right) \bm{T},
\label{eq:coefpred}
\end{equation}
\normalsize
where $\bm{I}_K$ and $\bm{1}_n$ denote a $K \times K$ identity matrix and a 1-vector of $n$ elements, respectively. Substituting this into the CPOD expansion \eqref{eq:cklexp}, the predicted $r$-th flow variable becomes:
\begin{equation}
\hat{Y}^{(r)}(\bm{x},t; \bm{c}_{new}) = \sum_{k=1}^{K_r} \hat{\beta}^{(r)}_k(\bm{c}_{new}) \mathcal{M}_{new} \{\phi^{(r)}_k(\bm{x})\},
\label{eq:flowpred}
\end{equation}
with the associated spatio-temporal variance:
\small
\begin{equation}
\mathbb{V} \{{Y}^{(r)}(\bm{x},t; \bm{c}_{new})| \{{Y}^{(r)}(\bm{x},t; \bm{c}_{i})\}_{i=1}^n\} = \sum_{k=1}^{K_r} \mathbb{V}\{{\beta}^{(r)}_k(\bm{c}_{new}) | \{\boldsymbol{\beta}(\bm{c}_i)\}^n_{i=1}\} \} \left[\mathcal{M}_{new} \{\phi^{(r)}_k(\bm{x})\} \right]^2,
\label{eq:flowvar}
\end{equation}
\normalsize
where $\phi^{(r)}_k(\bm{x})$ is the $k$-th CPOD mode for flow variable $r$. This holds because the CPOD modes for a {fixed} flow variable are orthogonal (see Section 3.1).

It is worth noting that, when model parameters are {known}, the MMSE predictor in \eqref{eq:coefpred} from the proposed co-kriging model (which we call $M_A$) is the same as the MMSE predictor from the simpler \textit{independent} GP model with $\bm{T}$ diagonal (which we call $M_0$). One advantage of the co-kriging model $M_A$, however, is that it provides improved UQ compared to the independent model $M_0$, as we show below. Moreover, the MMSE predictor for a derived function $g$ of the flow can be quite different between $M_A$ and $M_0$. This is demonstrated in the study of turbulent kinetic energy in Section 4.3.

\subsubsection{CPOD covariance matrix}
\label{sec:covmat}

\begin{figure}[t]
\centering
\includegraphics[width=0.5\textwidth]{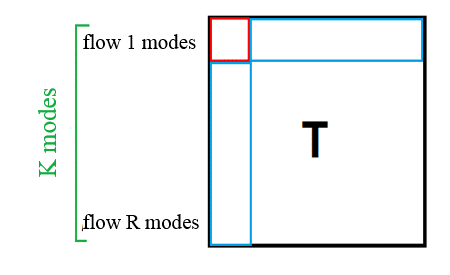}
\caption{Illustration of the CPOD correlation matrix $\bm{T}$. Red indicates a diagonal matrix, while blue indicates non-diagonal entries.}
\label{fig:TT}
\end{figure}

We briefly describe why the CPOD covariance matrix $\bm{T}$ is appealing from both a physical and a statistical perspective. From the underlying governing equations, it is well known that certain dynamic behaviors are strongly \textit{coupled} for different flow variables \citep{Pop2001}. For example, pressure oscillation in the form of acoustic waves within an injector can induce velocity and density fluctuations. In this sense, $\bm{T}$ incorporates knowledge of these physical couplings within the {emulator} itself, with $\bm{T}_{ij}\gg 0$ indicating the presence of a significant coupling between modes $i$ and $j$, and vice versa. The covariance selection and estimation of $\bm{T}$ therefore provide a data-driven way to \textit{extract} and \textit{rank} significant flow couplings, which is of interest in itself and can be used to guide further experiments. Note that the block submatrices of $\bm{T}$ corresponding to the {same} flow variable (marked in red in Figure \ref{fig:TT}) should be diagonal, by the orthogonality of CPOD modes.

The CPOD covariance matrix $\bm{T}$ also plays an important {statistical} role in emulation. Specifically, when significant cross-correlations exist between modes (which we know to be true from the flow couplings imposed by governing equations), the incorporation of this correlation structure within our model ought to provide a more {accurate} quantification of uncertainty. This is indeed true, and is made precise by the following theorem.

\begin{theorem}
Consider the two models $M_0: \boldsymbol{\beta}(\bm{c}) \in \mathbb{R}^K \sim GP\{\boldsymbol{\mu}, \boldsymbol{\Sigma}^{(0)}\}$ and $M_A: \boldsymbol{\beta}(\bm{c}) \sim GP\{\boldsymbol{\mu}, \boldsymbol{\Sigma}^{(A)}\}$, where $\boldsymbol{\Sigma}^{(0)}(\bm{c}_1, \bm{c}_2)= r_\tau(\bm{c}_1, \bm{c}_2) \bm{D}$ and $\boldsymbol{\Sigma}^{(A)}(\bm{c}_1, \bm{c}_2)= r_\tau(\bm{c}_1, \bm{c}_2) \bm{T}$ with $\bm{T} \succeq 0$ and $\bm{D} = \textup{diag}\{\bm{T}\}$. Let $C_0$ be the $100(1-\alpha)\%$ highest-density confidence region (HDCR, see \citealp{Hyn1996}) of $\boldsymbol{\beta}(\bm{c}_{new})|\{\boldsymbol{\beta}(\bm{c}_i)\}_{i=1}^n$ under $M_0$. Suppose $\lambda_{min}(\bm{T}^{1/2}\bm{D}^{-1}\bm{T}^{1/2}) > 1$. Then:
\[\mathbb{P}\left\{\boldsymbol{\beta}(\bm{c}_{new}) \in C_0|M_A, \{\boldsymbol{\beta}(\bm{c}_i)\}_{i=1}^n \right\} < 1-\alpha.\]
\label{thm:uq}
\end{theorem}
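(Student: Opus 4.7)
The plan is to reduce the claim to comparing a weighted quadratic form of a standard Gaussian against a $\chi^2_K$ threshold. First I would observe that under both $M_0$ and $M_A$ the conditional distribution $\boldsymbol{\beta}(\bm{c}_{new})\mid \{\boldsymbol{\beta}(\bm{c}_i)\}_{i=1}^n$ is multivariate Gaussian, and that the Kronecker-product algebra carried out in \eqref{eq:coefdist}--\eqref{eq:coefpred} makes the posterior \emph{mean} $\hat{\boldsymbol{\beta}} := \hat{\boldsymbol{\beta}}(\bm{c}_{new})$ \emph{independent} of the cross-mode covariance (since the $\bm{T}\bm{T}^{-1}$ factor collapses to the identity), a point already noted in the paragraph preceding the theorem. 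The two posteriors therefore share the mean $\hat{\boldsymbol{\beta}}$ and differ only in covariance: $\sigma_{new}^2 \bm{D}$ under $M_0$ versus $\sigma_{new}^2 \bm{T}$ under $M_A$, with $\sigma_{new}^2 := 1 - \bm{r}_{\tau,new}^T \bm{R}_\tau^{-1} \bm{r}_{\tau,new}$.

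Since the $M_0$ posterior is Gaussian, its HDCR is the Mahalanobis ellipsoid
\[
C_0 = \left\{\boldsymbol{\beta}: (\boldsymbol{\beta}-\hat{\boldsymbol{\beta}})^T(\sigma_{new}^2\bm{D})^{-1}(\boldsymbol{\beta}-\hat{\boldsymbol{\beta}}) \leq \chi^2_{K,1-\alpha}\right\},
\]
where $\chi^2_{K,1-\alpha}$ denotes the upper $(1-\alpha)$-quantile of a $\chi^2_K$ variable. Under $M_A$ I would change variables to $\bm{Z} := (\sigma_{new}^2\bm{T})^{-1/2}(\boldsymbol{\beta}(\bm{c}_{new})-\hat{\boldsymbol{\beta}}) \sim \mathcal{N}(0,\bm{I}_K)$, so that the event $\{\boldsymbol{\beta}(\bm{c}_{new})\in C_0\}$ rewrites as $\{\bm{Z}^T \bm{A}\bm{Z} \leq \chi^2_{K,1-\alpha}\}$ with $\bm{A} := \bm{T}^{1/2}\bm{D}^{-1}\bm{T}^{1/2}$, which is symmetric positive definite because it factors as $(\bm{D}^{-1/2}\bm{T}^{1/2})^T(\bm{D}^{-1/2}\bm{T}^{1/2})$.

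Diagonalizing $\bm{A}=\bm{U}\boldsymbol{\Lambda}\bm{U}^T$ and setting $\tilde{\bm{Z}}:=\bm{U}^T\bm{Z}\sim\mathcal{N}(0,\bm{I}_K)$ reduces the quadratic form to $\sum_{i=1}^K \lambda_i \tilde{Z}_i^2$. The hypothesis $\lambda_{\min}(\bm{A})>1$ forces every $\lambda_i > 1$, and since $\tilde{\bm{Z}}\neq 0$ almost surely we get $\sum_i \lambda_i \tilde{Z}_i^2 > \sum_i \tilde{Z}_i^2$ a.s., yielding
\[
\mathbb{P}\bigl(\boldsymbol{\beta}(\bm{c}_{new})\in C_0 \mid M_A\bigr) = \mathbb{P}\Bigl(\textstyle\sum_i \lambda_i\tilde{Z}_i^2 \leq \chi^2_{K,1-\alpha}\Bigr) < \mathbb{P}\Bigl(\textstyle\sum_i \tilde{Z}_i^2 \leq \chi^2_{K,1-\alpha}\Bigr) = 1-\alpha,
\]
which is the desired inequality. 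The only real bookkeeping hurdle is verifying the mean-invariance via the Kronecker identities in the first step; the spectral-domination step is essentially a one-line argument, and strictness follows from the measure-zero exception $\tilde{\bm{Z}}=0$.
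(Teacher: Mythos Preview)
Your proposal is correct and follows essentially the same route as the paper: identify the common posterior mean, write $C_0$ as a Mahalanobis ellipsoid, standardize under $M_A$ to reduce to $\bm{Z}^T\bm{T}^{1/2}\bm{D}^{-1}\bm{T}^{1/2}\bm{Z}$ with $\bm{Z}\sim\mathcal{N}(\bm{0},\bm{I}_K)$, and then use $\lambda_{\min}>1$ to compare against a $\chi^2_K$ quantile. The only cosmetic difference is that the paper bounds directly via the Rayleigh inequality $\bm{Z}^T\bm{A}\bm{Z}\ge\lambda_{\min}\|\bm{Z}\|^2$ and then invokes strict monotonicity of the $\chi^2_K$ CDF, whereas you diagonalize first; both arrive at the same strict inequality.
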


\begin{proof}
For brevity, let $\boldsymbol{\beta} \equiv \boldsymbol{\beta}(\bm{c}_{new})|\{\boldsymbol{\beta}(\bm{c}_i)\}_{i=1}^n$, and let $\hat{\boldsymbol{\beta}} \equiv \mathbb{E}[\boldsymbol{\beta}(\bm{c}_{new})|\{\boldsymbol{\beta}(\bm{c}_i)\}_{i=1}^n]$. Letting $\bm{Z} \sim \mathcal{N}(\bm{0},\bm{I}_K)$, it is easy to show that
\[\boldsymbol{\beta} - \hat{\boldsymbol{\beta}} | M_0 \sim \mathcal{N}\left\{\bm{0}, \left(1 - \bm{r}_{\tau,new}^T \bm{R}_{\tau}^{-1} \bm{r}_{\tau,new} \right) \bm{D} \right\} \stackrel{d}{=} \sqrt{1 - \bm{r}_{\tau,new}^T \bm{R}_{\tau}^{-1} \bm{r}_{\tau,new} } \bm{D}^{1/2} \bm{Z}, \quad \text{and}\]
\[\boldsymbol{\beta} - \hat{\boldsymbol{\beta}} | M_A \sim \mathcal{N}\left\{\bm{0}, \left(1 - \bm{r}_{\tau,new}^T \bm{R}_{\tau}^{-1} \bm{r}_{\tau,new} \right) \bm{T} \right\} \stackrel{d}{=} \sqrt{1 - \bm{r}_{\tau,new}^T \bm{R}_{\tau}^{-1} \bm{r}_{\tau,new} } \bm{T}^{1/2} \bm{Z}.\]
Under the independent model $M_0$, the $100(1-\alpha)\%$ HDCR becomes:
\[C_0 = \{\boldsymbol{\xi} \; : \; \left(1 - \bm{r}_{\tau,new}^T \bm{R}_{\tau}^{-1} \bm{r}_{\tau,new} \right)^{-1} (\boldsymbol{\xi} - \hat{\boldsymbol{\beta}})^T \bm{D}^{-1} (\boldsymbol{\xi} - \hat{\boldsymbol{\beta}}) \leq \chi^2_K(1-\alpha) \},\]
where $\chi^2_K(1-\alpha)$ be the $(1-\alpha)$-quantile of a $\chi^2$-distribution with $K$ degrees of freedom. Now, let $\lambda_{min}$ denote the minimum eigenvalue of $\bm{T}^{1/2} \bm{D}^{-1} \bm{T}^{1/2}$. It follows that
\begin{align*}
\mathbb{P}\left(\boldsymbol{\beta} \in C_0|M_A\right) &= \mathbb{P}\left\{ (\boldsymbol{\beta} - \hat{\boldsymbol{\beta}})^T \bm{D}^{-1} (\boldsymbol{\beta} - \hat{\boldsymbol{\beta}}) \leq \left(1 - \bm{r}_{\tau,new}^T \bm{R}_{\tau}^{-1} \bm{r}_{\tau,new} \right) \chi^2_K(1-\alpha) \Big| M_A \right\}\\
&= \mathbb{P}\left\{ \bm{Z}^T (\bm{T}^{1/2} \bm{D}^{-1} \bm{T}^{1/2}) \bm{Z} \leq \chi^2_K(1-\alpha) \right\}\\
&\leq \mathbb{P}\left\{ \bm{Z}^T \bm{Z} \leq \lambda_{min}^{-1} \chi^2_K(1-\alpha) \right\},
\end{align*}
since $\bm{Z}^T (\bm{T}^{1/2} \bm{D}^{-1} \bm{T}^{1/2}) \bm{Z} \geq \lambda_{min}\bm{Z}^T \bm{Z}$ almost surely. The asserted result follows because $\mathbb{P}\left\{ \bm{Z}^T \bm{Z} \leq \lambda_{min}^{-1} \chi^2_K(1-\alpha) \right\}$ is strictly less than $1-\alpha$ when $\lambda_{min} > 1$.

\end{proof}

In words, this theorem quantifies the effect on coverage probability when the true co-kriging model $M_A$, which accounts for cross-correlations between modes, is misspecified as $M_0$, the independent model ignoring such cross-correlations. Note that an increase in the number of significant {non-zero cross-correlations} in $\bm{T}$ causes $\bm{T}^{1/2} \bm{D}^{-1} \bm{T}^{1/2}$ to deviate further from unity, which in turn may increase $\lambda_{min}$. Given enough such correlations, Theorem \ref{thm:uq} shows that the coverage probability from the misspecified model $M_0$ is less than the desired $100(1-\alpha)\%$ rate. In the present case, this suggests that when there are enough significant {flow couplings}, the co-kriging model $M_A$ provides more {accurate} UQ for the \textit{joint} prediction of flow variables when compared to the misspecified, independent model $M_0$. This improvement also holds for functions of flow variables (as we demonstrate later in Section \ref{sec:result}), although a formal argument is not presented here.

It is important to mention here an important trade-off for co-kriging models in general, and why the proposed model is appropriate for the application at hand in view of such a trade-off. It is known from spatial statistics literature (see, e.g., \citealp{Bea2014, Mea2016}) that when the matrix $\bm{T}$ exhibits strong correlations and can be estimated well, one enjoys improved predictive performance through a co-kriging model (this is formally shown for the current model in Theorem \ref{thm:uq}). However, when such correlations are absent or cannot be estimated well, a co-kriging model can yield poorer performance to an independent model! We claim that the former is true for the current application at hand. First, the differential equations governing the simulation procedure explicitly impose strong dependencies between flow variables, so we know \textit{a priori} the existence of strong correlations in $\bm{T}$. Second, we will show later in Section \ref{sec:correx} that the dominant correlations selected in $\bm{T}$ are physically interpretable in terms of fluid mechanic principles and conservation laws, which provides strong evidence for the correct estimation of $\bm{T}$.

One issue with fitting $M_A$ is that there are many more parameters to estimate. Specifically, since the CPOD covariance matrix $\bm{T}$ is $K \times K$ dimensional, there is {insufficient} data for estimating all entries in $\bm{T}$ using the extracted coefficients from the CPOD expansion. One solution is to impose the sparsity constraint $\|\bm{T}^{-1}\|_1 \leq \gamma$, where $\|\bm{A}\|_1 = \sum_{k=1}^K \sum_{l=1}^K |A_{kl}|$ is the element-wise $L_1$ norm. For a small choice of $\gamma$, this forces nearly all entries in $\bm{T}^{-1}$ to be zero, thus permitting consistent estimation of the {few significant} correlations. Sparsity can also be justified from an engineering perspective, because the number of significant couplings is known to be small from flow physics. $\gamma$ can also be adjusted to extract a {pre-specified} number of flow couplings, which is appealing from an engineering point-of-view. The justification for sparsifying $\bm{T}^{-1}$ instead of $\bm{T}$ is largely computational, because, algorithmically, the former problem can be handled much more efficiently than the latter using the graphical LASSO (\citealp{Fea2008}; see also \citealp{BT2011}). Such efficiency is crucial here, since GP parameters need to be jointly estimated as well.

Although the proposed model is {similar} to the one developed in \cite{Qea2008} for emulating qualitative factors, there are two key distinctions. First, our model allows for {different} process variances for each coefficient, whereas their approach restricts all coefficients to have {equal} variances. Second, our model incorporates sparsity on the CPOD covariance matrix, an assumption necessary from a {statistical} point-of-view and appealing from a physics extraction perspective. Lastly, the algorithm proposed below can estimate $\bm{T}$ more efficiently than the semi-definite programming approach in \cite{Qea2008}.

\subsection{Parameter estimation}
To estimate the model parameters $\boldsymbol{\mu}$, $\bm{T}$ and $\boldsymbol{\tau}$, maximum-likelihood estimation (MLE) is used in favor of a Bayesian implementation. The primary reason for this choice is computational efficiency: for the proposed emulator to be used as a fast investigative tool for surveying the design space, it should generate flow predictions much quicker than a direct LES simuation, which requires several days of parallelized computation.


From \eqref{eq:gpcoef} and \eqref{eq:gpcov}, the maximum-likelihood formulation can be written as $\argmin_{\boldsymbol{\mu}, \bm{T}, \boldsymbol{\tau}} \allowbreak l_\lambda(\boldsymbol{\mu}, \bm{T}, \boldsymbol{\tau})$, where $l_\lambda(\boldsymbol{\mu}, \bm{T}, \boldsymbol{\tau})$ is the \textit{penalized} negative log-likelihood:
\small
\begin{equation}
l_\lambda(\boldsymbol{\mu}, \bm{T}, \boldsymbol{\tau}) = n\log \det\bm{T} + K\log \det \bm{R}_\tau+(\bm{B}-\bm{1}_n \otimes \boldsymbol{\mu})^T [ \bm{R}_\tau^{-1}\otimes{\bm{T}}^{-1}](\bm{B}-\bm{1}_n \otimes \boldsymbol{\mu}) + \lambda \|\bm{T}^{-1}\|_1.
\label{eq:nll}
\end{equation}
\normalsize
Note that, because the formulation is convex in $\bm{T}^{-1}$, the sparsity constraint $\|\bm{T}^{-1}\|_1 \leq \gamma$ has been incorporated into the likelihood through the penalty $\lambda \|\bm{T}^{-1}\|_1$ using strong duality. Similar to $\gamma$, a {larger} $\lambda$ results in a smaller number of selected correlations, and vice versa. The tuning method for $\lambda$ should depend on the desired end-goal. For example, if predictive accuracy is the primary goal, then $\lambda$ should be tuned using cross-validation techniques \citep{Fea2001}. However, if correlation extraction is desired or prior information is available on flow couplings, then $\lambda$ should be set so that a {fixed} (preset) number of correlations is extracted. We discuss this further in Section \ref{sec:result}.

\begin{algorithm}[t]
\caption{BCD algorithm for maximum likelihood estimation}
\label{alg:mle}
\begin{algorithmic}[1]
\small
\ParFor{each time-step $t = 1, \cdots, T$}
\stb Set initial values $\boldsymbol{\mu} \leftarrow \bm{0}_K$, $\bm{T} \leftarrow \bm{I}_{K}$ and $\boldsymbol{\tau} \leftarrow \bm{1}_p$, and set $\bm{B} \leftarrow (\boldsymbol{\beta}(\bm{c}_1), \cdots, \boldsymbol{\beta}(\bm{c}_n))^T$
\Repeat\\
\quad \quad \quad \underline{Optimizing $\bm{T}$}:
\stb Set $\bm{W} \leftarrow \frac{1}{n}{(\bm{B} - \bm{1}_n \otimes \boldsymbol{\mu}^T )^T \bm{R}_\tau^{-1}(\bm{B} - \bm{1}_n \otimes \boldsymbol{\mu}^T )} + \lambda \cdot \bm{I}_{K}$
\Repeat
\For{$j = 1, \cdots, K$}
\stb Solve $\tilde{\boldsymbol{\delta}} = \argmin_{\boldsymbol{\delta}} \left\{ \frac{1}{2} \|\bm{W}_{-j, -j}^{1/2}\boldsymbol{\delta}\|_2^2 + \lambda \|\boldsymbol{\delta}\|_1 \right\}$ using LASSO
\stb Update $\bm{W}_{-j,j} \leftarrow \bm{W}_{-j,-j} \tilde{\boldsymbol{\delta}}$ and $\bm{W}_{j,-j}^T \leftarrow \bm{W}_{-j,-j} \tilde{\boldsymbol{\delta}}$
\EndFor
\Until{$\bm{W}$ converges}
\stb Update $\bm{T} \leftarrow \bm{W}^{-1}$\\
\quad \quad \quad \underline{Optimizing $\boldsymbol{\mu}$ and $\boldsymbol{\tau}$}:
\stb Update $\boldsymbol{\tau} \leftarrow \argmin_{\tau} l_\lambda(\boldsymbol{\mu}_{\boldsymbol{\tau}}, \bm{T}, \boldsymbol{\tau})$ with L-BFGS, with $\boldsymbol{\mu}_{\boldsymbol{{\tau}}} = (\bm{1}_n^T \bm{R}_{{\tau}}^{-1}\bm{1}_n)^{-1} (\bm{1}_n^T \bm{R}_{{\tau}}^{-1} \bm{B})$ 
\stb Update $\boldsymbol{\mu} \leftarrow \boldsymbol{\mu}_{\boldsymbol{\tau}}$
\Until{$\boldsymbol{\mu}$, $\bm{T}$ and $\boldsymbol{\tau}$ converge}
\EndParFor
\stb \Return $\boldsymbol{\mu}(t)$, $\bm{T}(t)$ and $\boldsymbol{\tau}(t)$
\normalsize
\end{algorithmic}
\end{algorithm}

Assume for now a fixed penalty $\lambda>0$. To compute the MLEs in \eqref{eq:nll}, we propose the following \textit{blockwise coordinate descent} (BCD) algorithm. First, assign initial values for $\boldsymbol{\mu}$, $\bm{T}$ and $\boldsymbol{\tau}$. Next, iterate the following two updates until parameters converge: (a) for fixed GP parameters $\boldsymbol{\mu}$ and $\boldsymbol{\tau}$, optimize for $\bm{T}$ in \eqref{eq:nll}; and (b) for fixed covariance matrix $\bm{T}$, optimize for $\boldsymbol{\mu}$ and $\boldsymbol{\tau}$ in \eqref{eq:nll}. With the use of the graphical LASSO algorithm from \cite{Fea2008}, the first update can be computed efficiently. The second update can be computed using non-linear optimization techniques on $\boldsymbol{\tau}$ by means of a closed-form expression for $\boldsymbol{\mu}$. In our implementation, this is performed using the L-BFGS algorithm \citep{LN1989}, which offers a super-linear convergence rate without the cumbersome evaluation and manipulation of the Hessian matrix \citep{NW2006}. The following theorem guarantees that the proposed algorithm converges to a stationary point of \eqref{eq:nll} (see Appendix B for proof).

\begin{theorem}
The BCD scheme in Algorithm \ref{alg:mle} converges to some solution $(\hat{\boldsymbol{\mu}},\hat{\bm{T}},\hat{\boldsymbol{\tau}})$ which is stationary for the penalized log-likelihood $l_\lambda(\boldsymbol{\mu}, \bm{T}, \boldsymbol{\tau})$.
\label{thm:conv}
\end{theorem}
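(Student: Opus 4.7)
The strategy is to cast Algorithm \ref{alg:mle} as a two-block coordinate descent scheme for the penalized log-likelihood and invoke a standard convergence theorem for block coordinate descent with a block-separable nonsmooth term (for instance, the results of Tseng, 2001). First, decompose the objective as $l_\lambda = f + g$, where $f(\boldsymbol{\mu}, \bm{T}, \boldsymbol{\tau})$ collects the three smooth terms in \eqref{eq:nll} and $g(\bm{T}) = \lambda\|\bm{T}^{-1}\|_1$ is convex and depends only on the block $\bm{T}$. On the open domain $\{\bm{T}\succ 0\}\times\mathbb{R}^K\times(0,1)^p$, $f$ is continuously differentiable and $g$ is trivially block-separable, since the second block $(\boldsymbol{\mu}, \boldsymbol{\tau})$ does not appear in $g$. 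This is the canonical setting to which Tseng's convergence results apply.

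Next, I would verify that each block update in Algorithm \ref{alg:mle} produces the global minimizer of its respective subproblem. With $(\boldsymbol{\mu},\boldsymbol{\tau})$ fixed, the $\bm{T}$-subproblem reduces to a graphical LASSO with empirical covariance $\bm{W}$ assembled in line~4; this problem is strictly convex in $\bm{T}^{-1}\succ 0$, and the inner loop in lines~6--10 returns its unique minimizer. With $\bm{T}$ fixed, the closed-form generalized-least-squares solution $\boldsymbol{\mu}_{\boldsymbol{\tau}} = (\bm{1}_n^T\bm{R}_{\boldsymbol{\tau}}^{-1}\bm{1}_n)^{-1}(\bm{1}_n^T\bm{R}_{\boldsymbol{\tau}}^{-1}\bm{B})$ profiles $\boldsymbol{\mu}$ out, reducing the remaining block to minimizing the profile likelihood in $\boldsymbol{\tau}\in(0,1)^p$, which L-BFGS drives to a stationary point. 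Each outer iteration therefore produces a non-increasing sequence of objective values.

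Then I would establish that the iterates remain in a compact subset of the domain. The component $\boldsymbol{\tau}$ is confined to the closed box $[0,1]^p$; the $n\log\det\bm{T}$ term together with the coercive $L_1$ penalty on $\bm{T}^{-1}$ keeps $\bm{T}$ away from both the boundary of the positive-definite cone and from infinity on any sublevel set; and the closed-form expression for $\boldsymbol{\mu}_{\boldsymbol{\tau}}$ is continuous, hence bounded, whenever $\boldsymbol{\tau}$ lies in a compact subset of $(0,1)^p$ on which $\bm{R}_{\boldsymbol{\tau}}$ is uniformly positive definite. Combined with the monotone decrease and lower-boundedness of $l_\lambda$ on this compact set, Tseng's theorem then yields that every accumulation point of the BCD sequence is a stationary point of $l_\lambda$.

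The main obstacle will be the non-convexity of the profile likelihood in $\boldsymbol{\tau}$: the cleanest form of Tseng's theorem asks for either a unique block minimizer or pseudoconvexity per block, and neither of these is automatic for GP correlation-parameter likelihoods, which can exhibit multiple stationary points. If the uniqueness hypothesis fails, the argument is rescued by appealing to the weaker nonlinear Gauss--Seidel framework of Grippo and Sciandrone (2000), which requires only continuity of $l_\lambda$, a sufficient-decrease property at each block step, and boundedness of the iterates—all conditions already established above—to conclude that every accumulation point of the scheme is a stationary point of the full penalized problem.
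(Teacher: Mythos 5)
Your route is genuinely different from the paper's. The paper does not invoke Tseng or Grippo--Sciandrone; it casts one outer BCD cycle as a composed point-to-point map $S = A^M\circ B^N$ (finitely many graphical-LASSO sweeps followed by finitely many L-BFGS line searches) and applies Zangwill's Global Convergence Theorem (via Luenberger and Ye, Section 7.7), verifying (i) the iterates lie in a compact set, (ii) $l_\lambda$ is a descent function for $S$ relative to the stationary set, and (iii) $S$ is closed off that set. Your decomposition $l_\lambda = f + g$ with the block-separable nonsmooth term $g = \lambda\|\bm{T}^{-1}\|_1$ is the more natural modern framing: it handles the nonsmooth penalty structurally rather than by arguing continuity of the graphical-LASSO operator, and your compactness argument (coercivity of $-\log\det\bm{T}^{-1} + \lambda\|\bm{T}^{-1}\|_1$ on sublevel sets) is more principled than the paper's bald assertion of a priori bounds on $\bm{T}_m$. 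What the paper's approach buys is that it never needs the block updates to be exact minimizers --- only that each finite inner loop strictly decreases $l_\lambda$ at non-stationary points and that the composed map is closed.

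That last point is where your argument has a genuine mismatch with the algorithm as written. Both Tseng's theorem and the two-block corollary of Grippo--Sciandrone that dispenses with per-block uniqueness require each block subproblem to be solved to \emph{global} optimality; the inexact variants that get by with Armijo-type sufficient decrease are stated for smooth objectives. Algorithm \ref{alg:mle} performs only finitely many L-BFGS line searches on the profiled likelihood in $\boldsymbol{\tau}$, which is non-convex, so the $(\boldsymbol{\mu},\boldsymbol{\tau})$-block is neither globally minimized nor even driven to a block-stationary point within one outer iteration. Your fallback paragraph asserts that Grippo--Sciandrone needs ``only continuity, sufficient decrease, and boundedness,'' which overstates what those results give in the presence of the nonsmooth $\bm{T}$-block. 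To close the gap you would either have to restrict to the smooth part when invoking the inexact line-search results (exploiting that the nonsmooth term does not touch the $(\boldsymbol{\mu},\boldsymbol{\tau})$-block) or retreat to the descent-map formulation the paper uses. One further small caution: your compactness claim for $\boldsymbol{\tau}$ needs the iterates to stay in a compact subset of $(0,1)^p$ where $\bm{R}_{\boldsymbol{\tau}}$ is uniformly non-singular; this is asserted rather than proved in both your argument and the paper's.
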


It is worth noting that the proposed algorithm does not provide global optimization. This is not surprising, because the log-likehood $l_\lambda$ is non-convex in $\boldsymbol{\tau}$. To this end, we run multiple threads of Algorithm \ref{alg:mle} in parallel, each with a different initial point $\boldsymbol{\tau}_0$ from a large space-filling design on $[10^{-3},1-10^{-3}]^p$, then choose the converged parameter setting which yields the largest likelihood value from \eqref{eq:nll}. In our experience, this heuristic performs quite well in practice.

\section{Emulation results}\label{sec:result}

In this section, we present in four parts the emulation performance of the proposed model, when trained using the database of $n=30$ flow simulations described in Section 2. First, we briefly introduce key flow characteristics for a swirl injector, and physically interpret the flow structures extracted from CPOD. Second, we compare the numerical accuracy of our flow prediction with a validation simulation at a new injector geometry. Third, we provide a spatio-temporal quantification of uncertainty for our prediction, and discuss its physical interpretability. Lastly, we summarize the extracted flow couplings from $\mathbf{T}$, and explain why these are both intuitive and intriguing from a flow physics perspective.

\subsection{Visualization and CPOD modes}

We employ three flow snapshots of circumferential velocity (shown in Figure \ref{fig:inst}) to introduce key flow characteristics for a swirl injector: the fluid transition region, spreading angle, surface wave propagation and center recirculation. These characteristics will be used for assessing emulator accuracy, UQ and extracted flow physics. 

\begin{figure}[!t]
\begin{minipage}{0.48\textwidth}
\centering
\includegraphics[width=\linewidth]{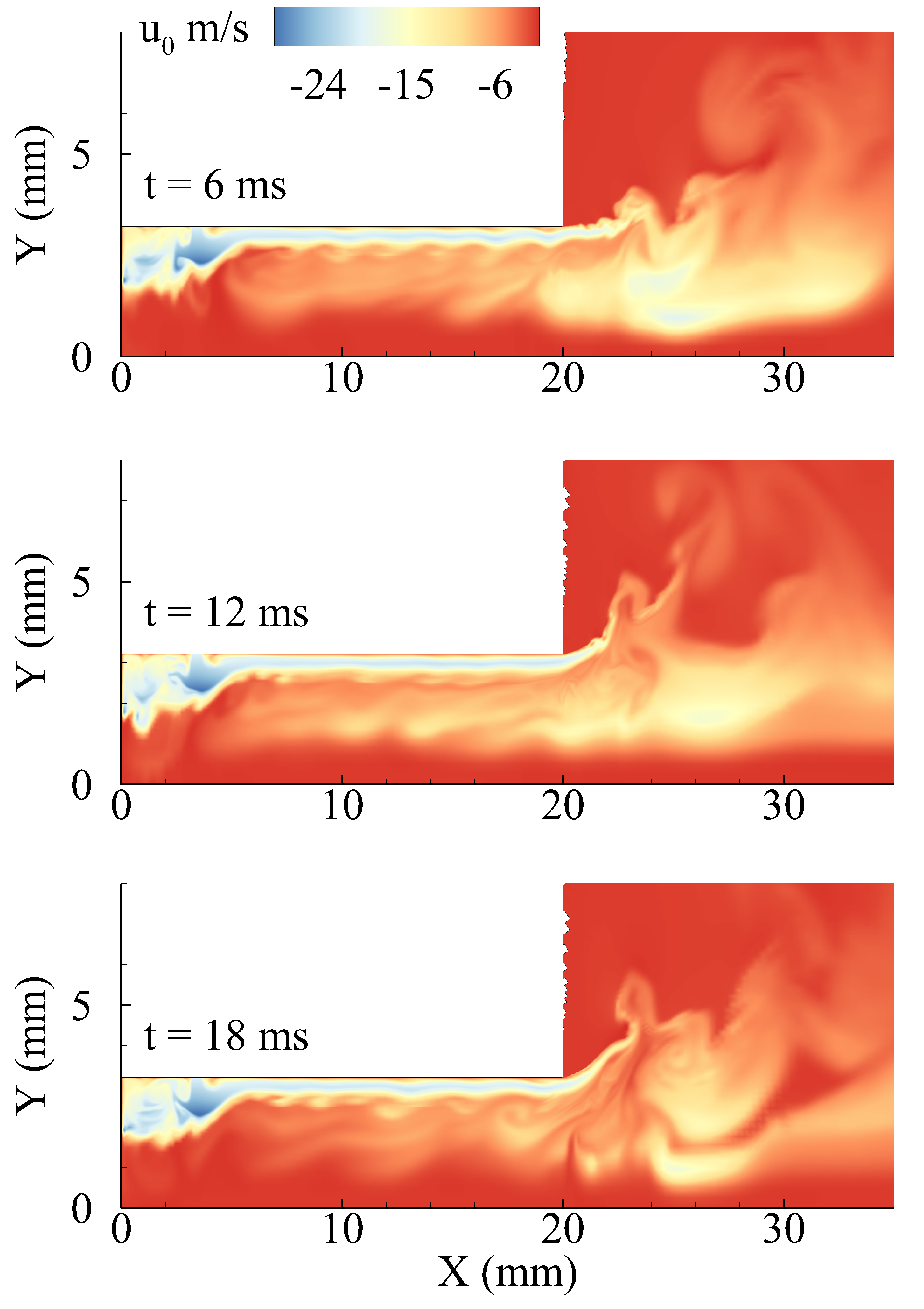}
{\caption{Flow snapshots of circumferential velocity at $t$ = 6, 12 and 18 ms.}
\label{fig:inst}}
\end{minipage}
\hfill
\begin{minipage}{0.48\textwidth}
\centering
\includegraphics[width=\textwidth]{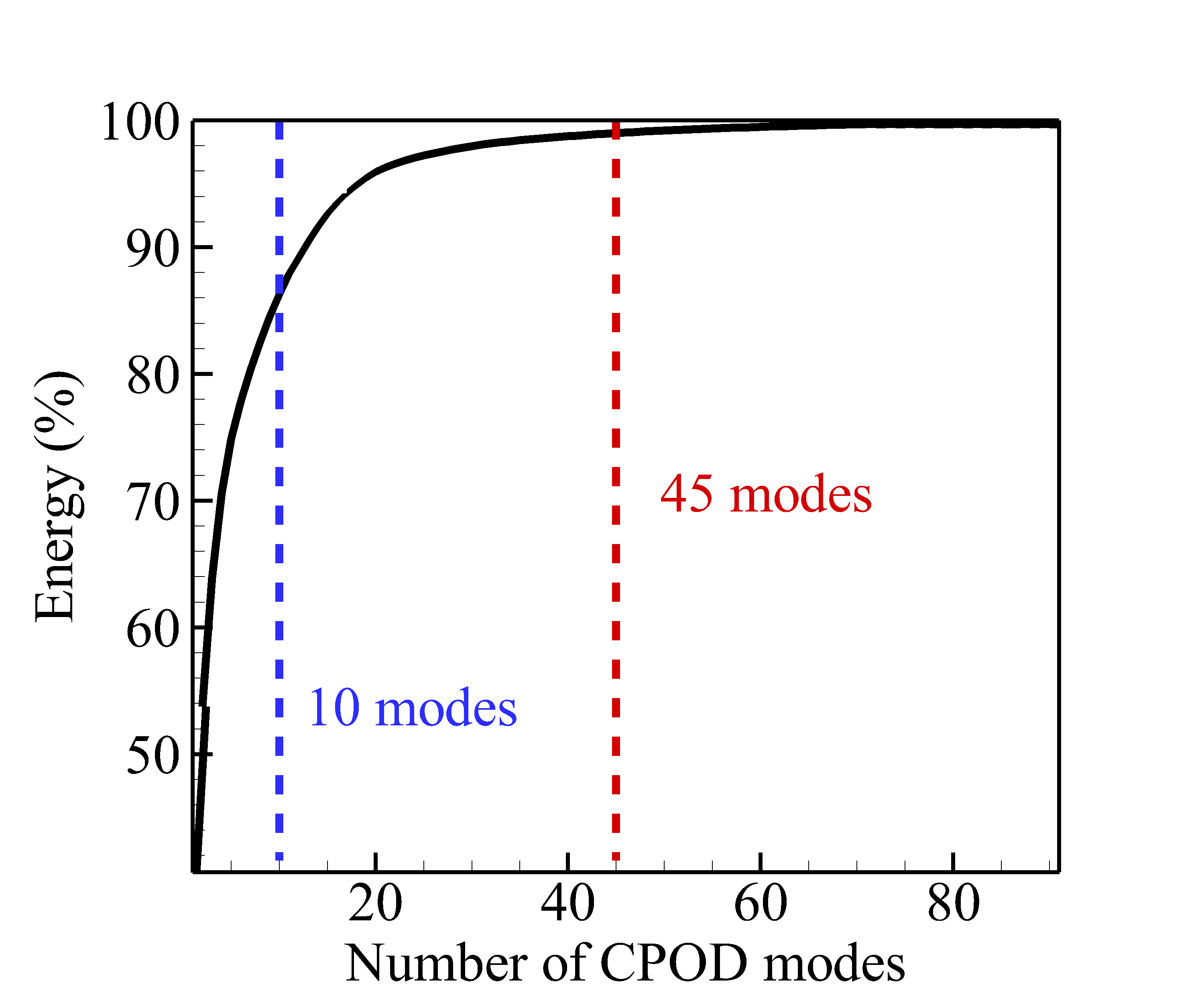}
{\caption{Energy distribution of CPOD modes for circumferential velocity flow.}
\label{fig:energy}}
\includegraphics[width=\linewidth]{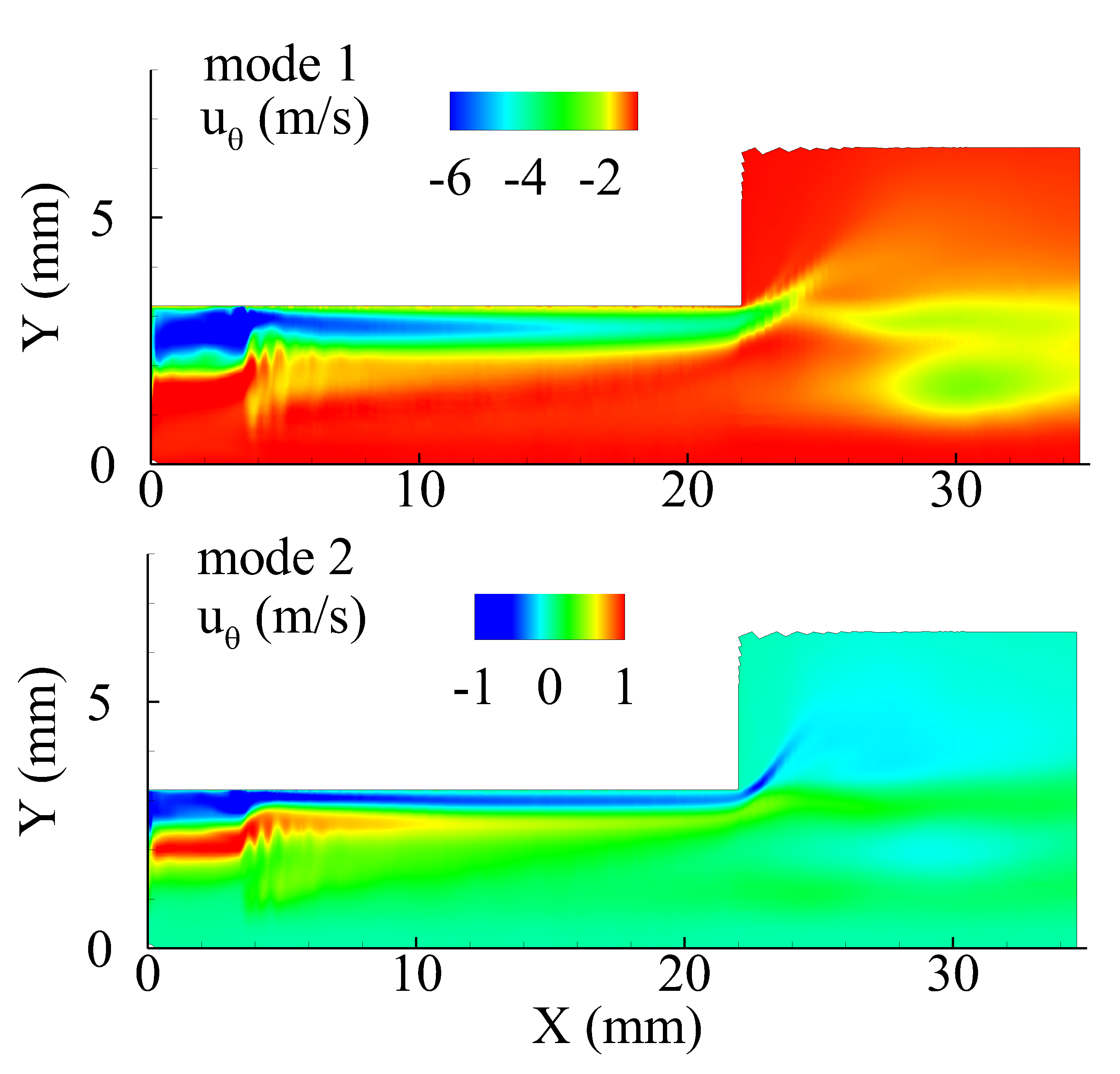}
\caption{The leading two spatial CPOD modes for circumferential velocity flow.}
\label{fig:PODmodes}
\end{minipage}
\end{figure}

\bi
\item \textit{Fluid transition region:} The fluid transition region is the region which connects compressed-liquid near the wall (colored blue in Figure \ref{fig:inst}) to light-gas (colored red) near the centerline at supercritical pressure \citep{WY2016}. This region is crucial for analyzing injector flow characteristics, as it provides the instability propagation and feedback mechanisms between the injector inlet and exit. An important emulation goal is to accurately predict both the {spatial location} of this region and its dynamics, because such information can be used to assess feedback behavior at new geometries.
\item \textit{Spreading angle:} The spreading angle $\alpha$ (along with the LOX film thickness $h$) is an important physical metric for measuring the performance of a swirl injector. A larger $\alpha$ and smaller $h$ indicate better performance of injector atomization and breakup processes. The spreading angle can be seen in Figure \ref{fig:inst} from the blue LOX flow at injector exit (see Figure \ref{fig:inj} for details).
\item \textit{Surface wave propagation:} Surface waves, which transfer energy through the fluid medium, manifest themselves as wavy structures in the flowfield. These waves allow for propagation of flow instabilities between upstream and downstream regions of the injector, and can be seen in the first snapshot of Figure \ref{fig:inst} along the LOX film boundary.
\item \textit{Center recirculation:} Center recirculation, another key instability structure, is the circular flow of a fluid around a rotational axis (this circular region is known as the {vortex core}). From the third snapshot in Figure \ref{fig:inst}, a large vortex core (in white) can be seen at the injector exit, which is expected because of sudden expansion of the LOX stream and subsequent generation of adverse pressure gradient.
\ei

Regarding the CPOD expansion, Figure \ref{fig:energy} shows the energy ratio captured using the leading $M$ terms in \eqref{eq:cklexp} for circumferential velocity, with this ratio defined as:
\[\xi(M) = \frac{\sum_{k=1}^M \sum_{i=1}^n \int \left[ \int \beta_k(t; \bm{c}_i) \mathcal{M}_i\{\phi_k(\bm{x})\} \; d\bm{x}\right]^2 \; dt}{\sum_{k=1}^\infty \sum_{i=1}^n \int \left[ \int \beta_k(t; \bm{c}_i) \mathcal{M}_i\{\phi_k(\bm{x})\} \; d\bm{x} \right]^2 \; dt}.\]
Only $M=10$ and $M=45$ modes are needed to capture 90\% and 99\% of the total flow energy over \textit{all} $n=30$ simulation cases, respectively. Compared to a similar experiment in \cite{ZY2008}, which required around $M=20$ modes to capture 99\% flow energy for a \textit{single} geometry, the current results are very promising, and show that the CPOD gives a reasonably compact representation. This also gives empirical evidence for the linearity assumption used for computation efficiency. Similar results also hold for other flow variables as well, and are not reported for brevity. Additionally, the empirical study in \cite{ZY2008} showed that the POD modes capturing the top 95\% energy have direct physical interpretability in terms of known flow instabilities. To account for these (and perhaps other) instability structures in the model, we set the truncation limit $K_r$ as the smallest value of $M$ satisfying $\xi(M) \geq 99\%$, which appears to provide a good balance between predictive accuracy and computational efficiency.

The extracted CPOD terms can also be interpreted in terms of flow physics. We illustrate this using the leading two CPOD terms for circumferential velocity, whose spatial distributions are shown in Figure \ref{fig:PODmodes}. Upon an inspection of these spatial plots and their corresponding spectral frequencies, both modes can be identified as hydrodynamic instabilities in the form of longitudinal waves propagating along the LOX film boundary. Specifically, the first mode corresponds to the first harmonic mode for this wave, and the second mode represents the second harmonic and shows the existence of an antinode in wave propagation. As we show in Section 4.4, the interpretability of CPOD modes allows the proposed model to extract physically meaningful couplings for further analysis.

\begin{figure}[!tp]
\begin{minipage}{0.48\textwidth}
\centering
\includegraphics[width=\linewidth]{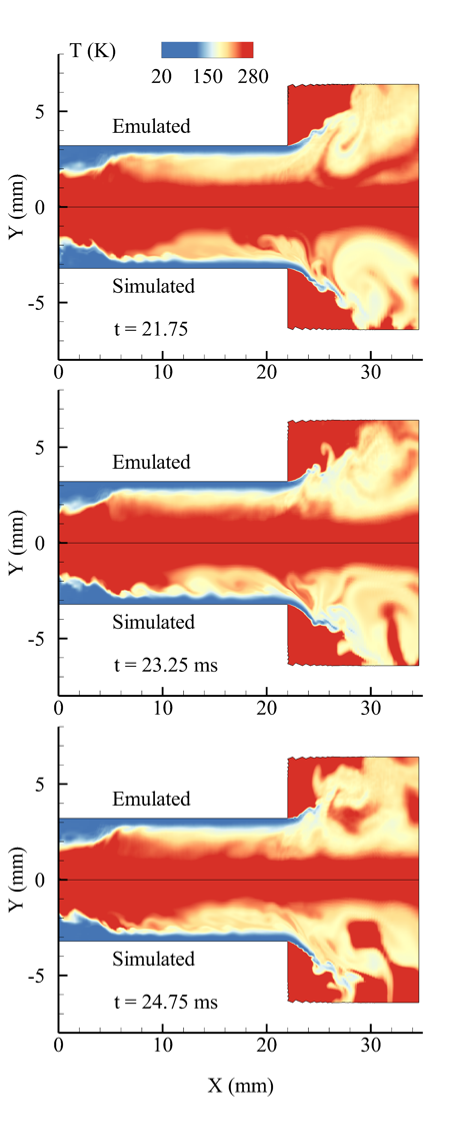}
\vspace{-1cm}
\caption{Simulated and emulated temperature flow at $t=21.75$ ms, $23.25$ ms and $24.75$ ms.}
\label{fig:comp}
\end{minipage}
\hfill
\begin{minipage}{0.48\textwidth}
\centering
\includegraphics[width=\textwidth]{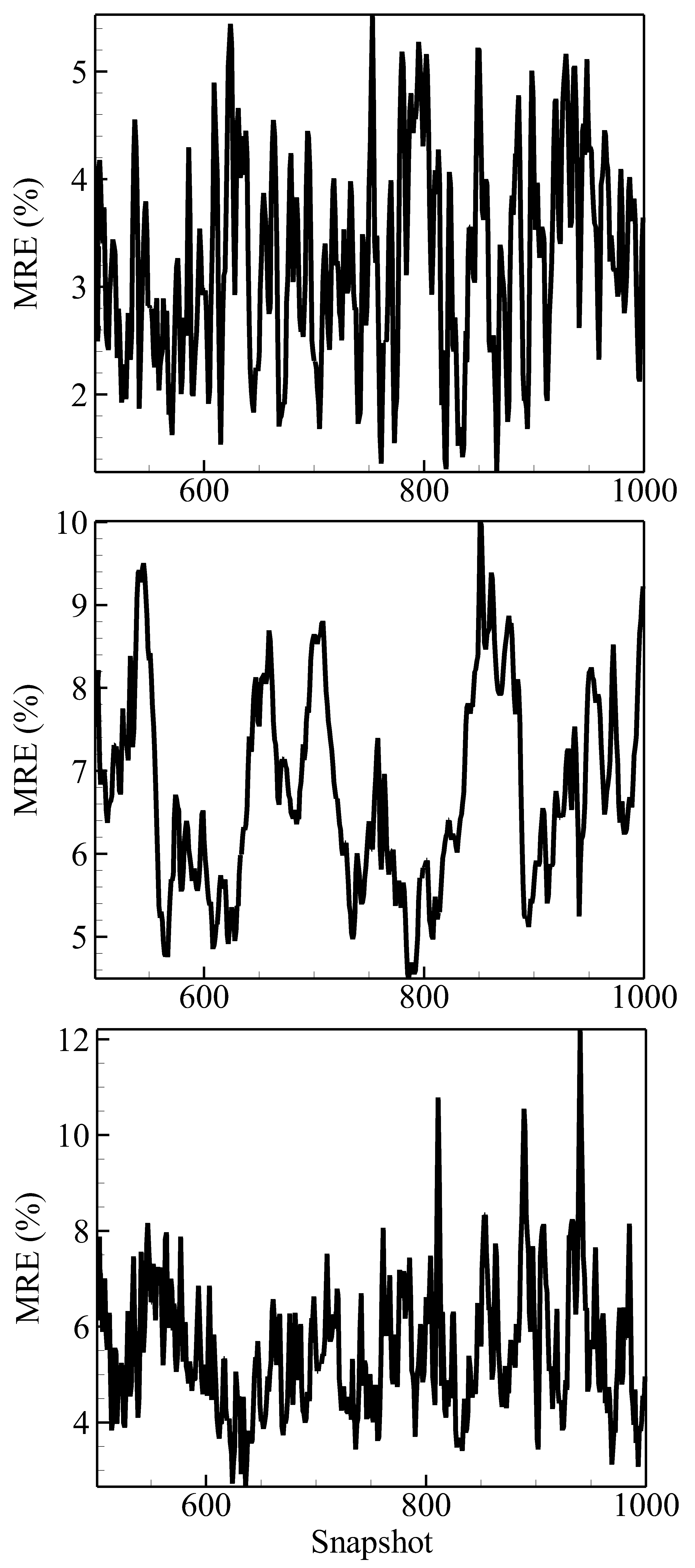}
\caption{MRE at injector inlet (top), fluid transition region (middle) and injector exit (bottom).}
\label{fig:mae}
\end{minipage}
\end{figure}

\begin{figure}[t]
\centering
\includegraphics[width=0.5\textwidth]{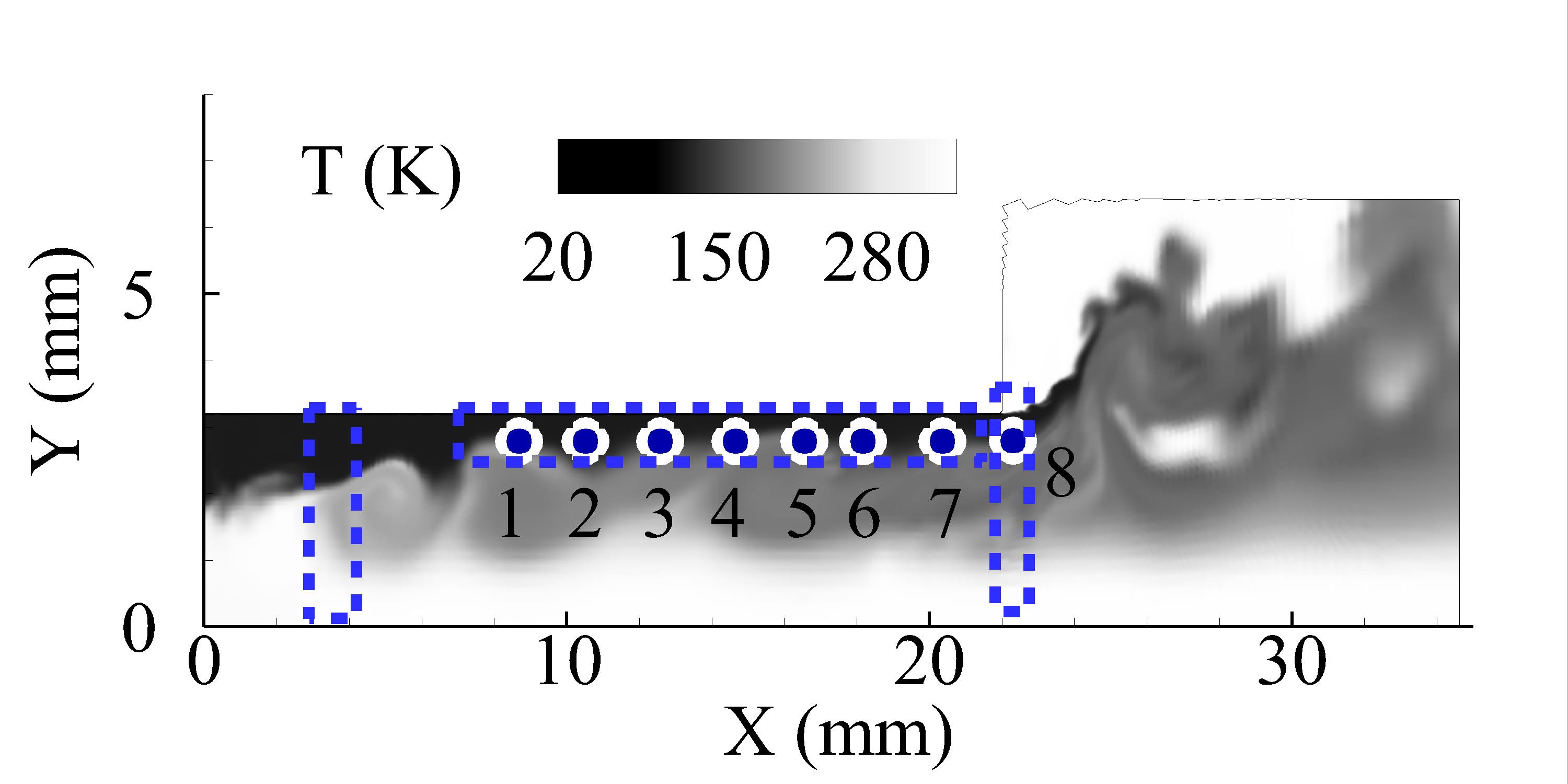}
\caption{Injector subregions (dotted in blue) and probe locations (circled in white).}
\label{fig:probes}
\end{figure}

\subsection{Emulation accuracy}
To ensure that our emulator model provides accurate flow predictions, we perform a validation simulation at the new geometric setting: $L=22\text{ mm}$, $R_n=3.215\text{ mm}$, $\Delta L=3.417\text{ mm}$, $\theta = 58.217^\circ$ and $\delta=0.576\text{ mm}$. This new geometry provides a 10\% variation on an existing injector used in the RD-0110 liquid-fuel engine \citep{Yan1995}. Since the goal is {predictive accuracy}, the sparsity penalty $\lambda$ in \eqref{eq:nll} is tuned using 5-fold cross-validation \citep{Fea2001}. We provide below a qualitative comparison of the predicted and simulated flows, and then discuss several metrics for quantifying emulation accuracy. 

Figure \ref{fig:comp} shows three snapshots of the simulated and predicted fully-developed flows for temperature, in intervals of $1.5$ ms starting at $21.75$ ms. From visual inspection, the predicted flow closely mimics the simulated flow on several performance metrics, including the fluid transition region, film thickness and spreading angle. The propagation of surface waves is also captured quite well within the injector, with key downstream recirculation zones correctly identified in the prediction as well. This comparison illustrates the effectiveness of the proposed emulator in capturing key flow physics, and demonstrates the importance of incorporating known flow properties of the fluid as assumptions in the statistical model.

Next, three metrics are used to quantify emulation accuracy. The first metric, which reports the mean relative error in important sub-regions of the injector, measures the \textit{spatial} aspect of prediction accuracy. The second metric, which inspects spectral similarities between the simulated and predicted flows, measures \textit{temporal} accuracy. The last metric investigates how well the predicted flow captures the underlying flow physics of an injector.

For {spatial} accuracy, the following mean relative error (MRE) metric is used:
\[
\text{MRE}(t;\mathcal{S})=\frac{\int_{\mathcal{S}}|Y(\mathbf{x},t;\mathbf{c}_{new})-\hat{Y}(\mathbf{x},t;\mathbf{c}_{new})| \; d \bm{x}}{\int_{\mathcal{S}}|Y(\mathbf{x},t;\mathbf{c}_{new})| \; d\bm{x}}\times 100\%,
\]
where $Y(\mathbf{x},t;\mathbf{c}_{new})$ is the simulated flow at setting $\bm{c}_{new}$, and $\hat{Y}(\mathbf{x},t;\mathbf{c}_{new})$ is the flow predictor in \eqref{eq:flowpred} (for brevity, the superscript for flow variable $r$ is omitted here). In words, MRE($t;\mathcal{S}$) provides a measure of emulation accuracy within a desired sub-region $\mathcal{S}$ at time $t$, relative to the overall flow energy in $\mathcal{S}$. Since flow behaviors within the injector inlet, fluid transition region and injector exit (outlined in Figure \ref{fig:probes}) are crucial for characterizing injector instability, we investigate the MRE specifically for these three sub-regions. Figure \ref{fig:mae} plots MRE($t,\mathcal{S}$) for $t = 15 - 30$ ms, when the flow has fully developed. For all three sub-regions, the relative error is within a tolerance level of 10\% for nearly all time-steps, which is very good from an engineering perspective. 

\begin{figure}[!t]
\centering
\includegraphics[width=\linewidth]{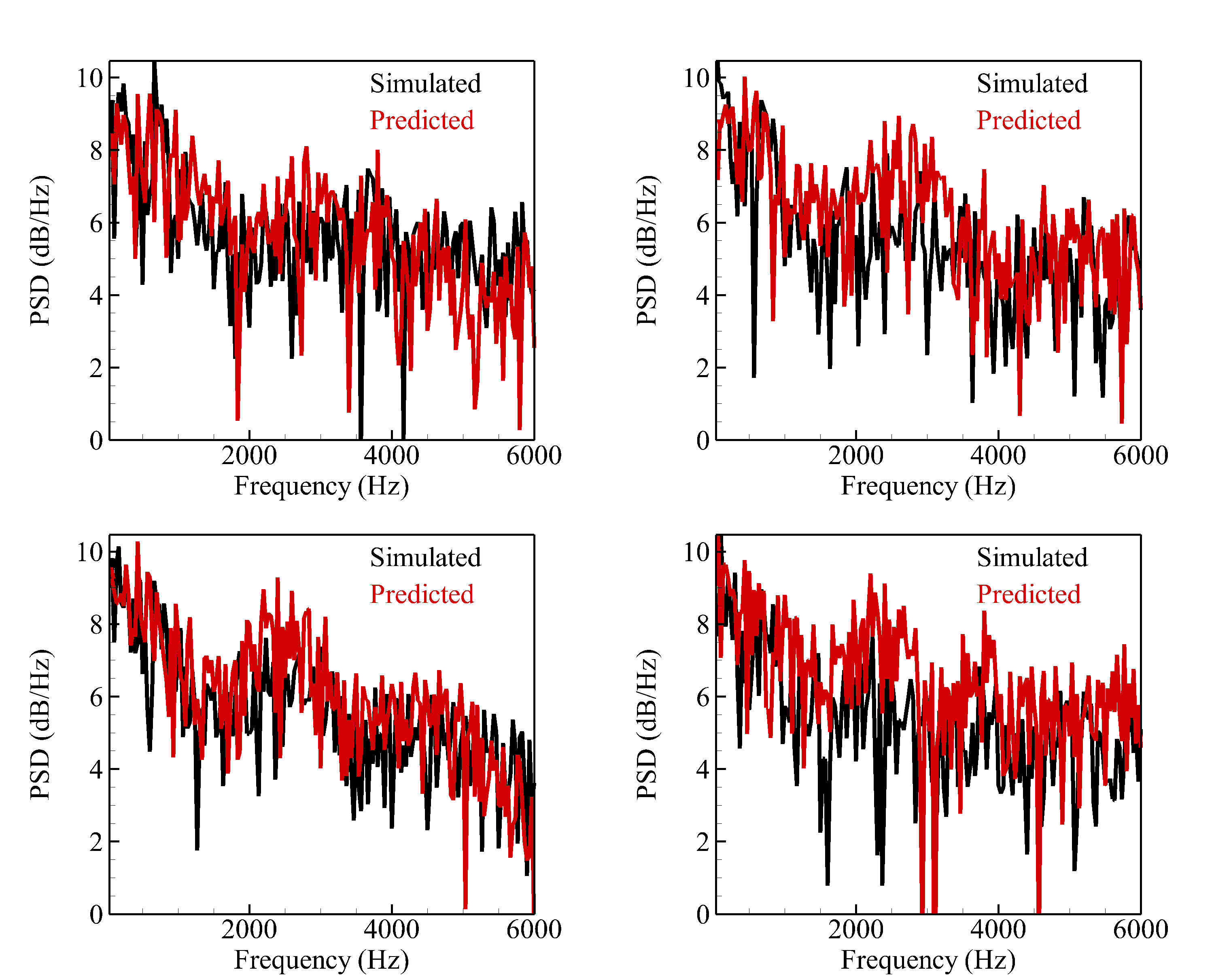}
\caption{PSD spectra for pressure flow at probes 1, 3, 5 and 7 (see Figure \ref{fig:probes}).}
\label{fig:PSD}
\end{figure}

To assess {temporal} accuracy, we conduct a power spectral density (PSD) analysis of predicted and simulated pressure flows at eight specific probes along the region of surface wave propagation (see Figure \ref{fig:probes}). This analysis is often performed as an empirical tool for assessing injector stability (see \citealp{ZY2008}), because surface waves allow for feedback loops between upstream and downstream oscillations \citep{bazarov1998liquid}. Figure \ref{fig:PSD} shows the PSD spectra for the predicted and simulated flow at four of these probes. Visually, the spectra look very similar, both at low and high frequencies, with {peaks} nearly identical for the predicted and simulated flow. Such peaks are highly useful for analyzing flow physics, because they can be used to {identify} physical properties (e.g., hydrodynamic, acoustic, etc.) of dominant instability structures. In this sense, the proposed emulator does an excellent job in mimicking important {physics} of the simulated flow.

\begin{figure}[!t]
\begin{minipage}{0.48\textwidth}
\centering
\includegraphics[width=\textwidth]{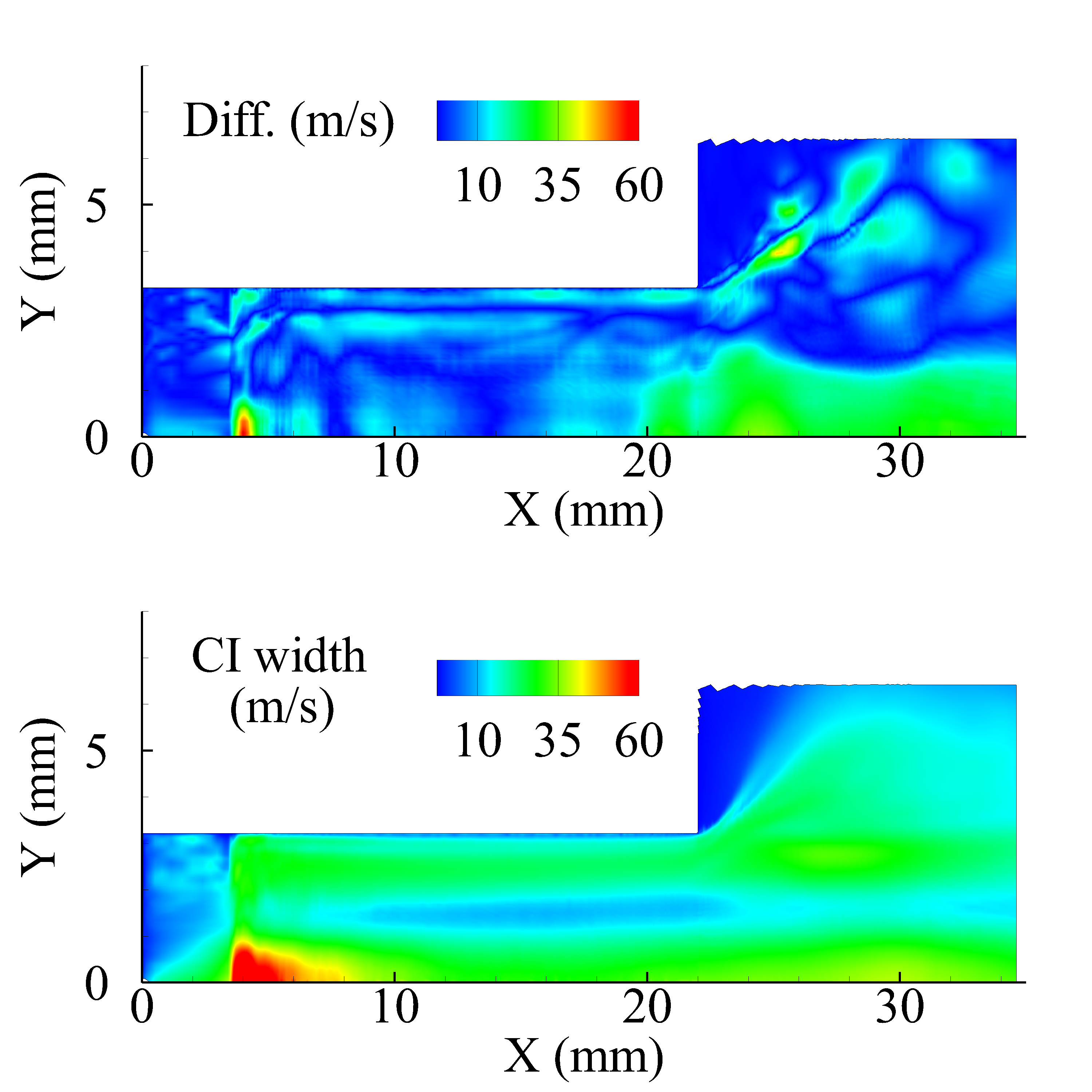}
\caption{Absolute prediction error (top) and pointwise CI width (bottom) for $x$-velocity at $t=15$ ms.}
\label{fig:UQcomparison}
\end{minipage}
\hfill
\begin{minipage}{0.48\textwidth}
\centering
\includegraphics[width=\textwidth]{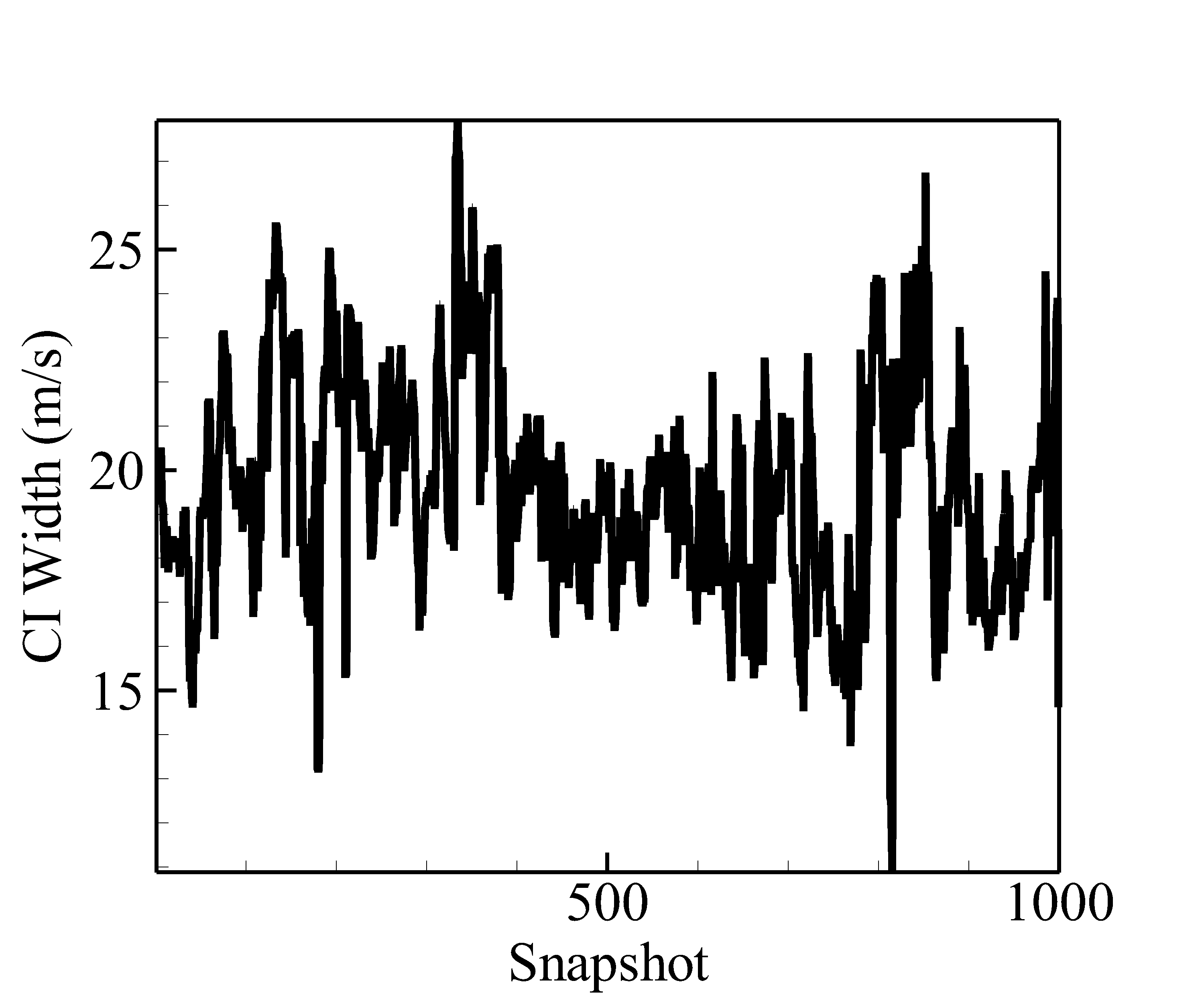}
\caption{CI width of $x$-velocity at probe 1.}
\label{fig:UQtemporal}
\end{minipage}
\end{figure}

Finally, we investigate the film thickness $h$ and spreading angle $\alpha$, which are key performance metrics for injector performance. Since both of these metrics are computed using spatial gradients of flow variables, an accurate emulation of these measures suggests accurate flow emulation as well. For the validation setting, the simulated (predicted) flow has a film thickness of 0.47 mm (0.42 mm) and a spreading angle of 103.63$^\circ$ (107.36$^\circ$), averaged over the fully-developed timeframe from $t=15 - 30$ ms. This corresponds to relative errors of 10.6\% and 3.60\%, respectively, and is within the desired error tolerance from an engineering perspective.

\subsection{Uncertainty quantification}

For computer experiments, the quantification of predictive uncertainty can be as important as the prediction itself. To this end, we provide a spatio-temporal representation of this UQ, and show that it has a useful and appealing physical interpretation. For spatial UQ, the top plot of Figure \ref{fig:UQcomparison} shows the {one-sided width} of the 80\% pointwise confidence interval (CI) from \eqref{eq:flowvar} for $x$-velocity at $t=15$ ms. It can be seen that the emulator is most certain in predicting near the inlet and centerline of the injector, but shows high predictive uncertainty at the three gaseous cores downstream (in green). This makes physical sense, because these cores correspond to flow recirculation vortices, and therefore exhibit highly unstable flow behavior. From the bottom plot of Figure \ref{fig:UQcomparison}, which shows the absolute emulation error of the same flow, the pointwise confidence band not only covers the realized prediction error, but roughly mimics its spatial distribution as well.

For temporal UQ, Figure \ref{fig:UQtemporal} shows the same one-sided CI width at probe 1 (see Figure \ref{fig:probes}). We see that this temporal uncertainty is relatively steady over $t$, except for two abrupt spikes at time-steps around 300 and 800. These two spikes have an appealing physical interpretation: the first indicates a {flow displayment} effect of the central vortex core, whereas the second can be attributed to the {boundary development} of the same core. This again demonstrates the usefulness of UQ not only as a measure of predictive uncertainty, but also as a means for extracting useful flow physics without the need for expensive simulations.

To illustrate the improved UQ of the proposed model (see Theorem \ref{thm:uq}), we use a derived quantity called {turbulent kinetic energy} (TKE). TKE is typically defined as:
\small
\begin{equation}
\kappa(\bm{x},t) = \frac{1}{2}\sum_{r \in \{u,v,w\}}\left\{{Y}^{(r)}(\mathbf{x},t)-\bar{{Y}}^{(r)}(\mathbf{x})\right\}^2,
\label{eq:kedef}
\end{equation}
\normalsize
where $Y^{(u)}(\mathbf{x},t),Y^{(v)}(\mathbf{x},t)$ and $Y^{(w)}(\mathbf{x},t)$ are flows for $x$-, $y$- and circumferential velocities, respectively, with $\bar{Y}^{(u)}(\mathbf{x}),\bar{Y}^{(v)}(\mathbf{x})$ and $\bar{Y}^{(w)}(\mathbf{x})$ its corresponding time-averages. Such a quantity is particularly important for studying turbulent instabilities, because it measures {fluid rotation energy} within eddies and vortices.

For the sake of simplicity, assume that (a) the time-averages $\bar{Y}^{(u)}(\mathbf{x}),\bar{Y}^{(v)}(\mathbf{x})$ and $\bar{Y}^{(w)}(\mathbf{x})$ are fixed, and (b) the parameters $(\boldsymbol{\mu}, \bm{T}, \boldsymbol{\tau})$ are known. The following theorem provides the MMSE predictor and pointwise confidence interval for $\kappa(\bm{x},t)$ (proof in Appendix C).
\begin{theorem}
For fixed $\bm{x}$ and $t$, the MMSE predictor of $\kappa(\bm{x},t)$ at a new setting $\bm{c}_{new}$ is
\small
\begin{equation}
\hat{\kappa}(\bm{x},t) = \frac{1}{2}\sum_{r \in \{u,v,w\}}\left\{\hat{Y}^{(r)}(\mathbf{x},t)-\bar{{Y}}^{(r)}(\mathbf{x})\right\}^2 + \text{tr}\{\Phi(\bm{x},t)\},
\label{eq:kepred}
\end{equation}
\normalsize
where $\hat{Y}^{(u)}(\mathbf{x},t),\hat{Y}^{(v)}(\mathbf{x},t)$ and $\hat{Y}^{(w)}(\mathbf{x},t)$ are {predicted} flows for $x$-, $y$- and circumferential velocities from \eqref{eq:flowpred}, and $\Phi(\bm{x},t)$ is defined in (C.1) of Appendix C. Moreover, $\hat{\kappa}(\bm{x},t)$ is distributed as a weighted sum of non-central $\chi^2$ random variables, with an explicit expression given in (C.3) of Appendix C.
\label{thm:kinuq}
\end{theorem}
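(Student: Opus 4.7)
The plan is to recognize that the turbulent kinetic energy $\kappa(\bm{x},t)$ is a quadratic form in the three centered velocity components at the spatio-temporal point $(\bm{x},t)$, and that under the co-kriging model these components are jointly Gaussian conditional on the training data. With this in hand, both assertions in the theorem reduce to standard facts about quadratic forms in multivariate normal vectors.

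First I would stack the three velocity fluctuations into the vector $\bm{v}(\bm{x},t;\bm{c}_{new}) = (Y^{(u)} - \bar{Y}^{(u)}, Y^{(v)} - \bar{Y}^{(v)}, Y^{(w)} - \bar{Y}^{(w)})^T$, so that $\kappa = \tfrac{1}{2}\bm{v}^T\bm{v}$. Using the CPOD reconstruction in \eqref{eq:flowpred}, each $Y^{(r)}(\bm{x},t;\bm{c}_{new})$ is a linear combination of the predicted coefficients $\boldsymbol{\beta}^{(r)}(\bm{c}_{new})$ with weights given by the rescaled spatial modes $\mathcal{M}_{new}\{\phi^{(r)}_k(\bm{x})\}$. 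Stacking these weights into a $3 \times K$ matrix $\bm{A}(\bm{x})$ (block-diagonal across the three velocity blocks, zero on the non-velocity blocks of $\boldsymbol{\beta}$), I can write $\bm{v}(\bm{x},t;\bm{c}_{new}) = \bm{A}(\bm{x})\boldsymbol{\beta}(\bm{c}_{new}) - \bar{\bm{Y}}(\bm{x})$. Since the posterior of $\boldsymbol{\beta}(\bm{c}_{new})$ in \eqref{eq:coefdist} is multivariate normal with known mean and covariance, $\bm{v}$ inherits a multivariate normal conditional distribution with mean $\hat{\bm{v}}(\bm{x},t) = \bm{A}(\bm{x})\hat{\boldsymbol{\beta}}(\bm{c}_{new}) - \bar{\bm{Y}}(\bm{x})$ and some $3\times 3$ covariance matrix which I would identify as $2\,\Phi(\bm{x},t)$; explicitly, $\Phi(\bm{x},t) = \tfrac{1}{2}(1 - \bm{r}_{\tau,new}^T\bm{R}_\tau^{-1}\bm{r}_{\tau,new})\,\bm{A}(\bm{x})\bm{T}\bm{A}(\bm{x})^T$.

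For the MMSE predictor, the standard identity $\mathbb{E}[\bm{v}^T\bm{v}] = \hat{\bm{v}}^T\hat{\bm{v}} + \mathrm{tr}(\mathrm{Cov}(\bm{v}))$ applied conditionally on the training data gives precisely \eqref{eq:kepred}, so this part is a one-line computation once the conditional distribution of $\bm{v}$ is established. For the distributional claim, I would spectrally decompose the posterior covariance $2\Phi(\bm{x},t) = \bm{U}\bm{\Lambda}\bm{U}^T$ and make the change of variables $\bm{y} = \bm{\Lambda}^{-1/2}\bm{U}^T\bm{v}$, which is standard normal with shifted mean $\bm{\Lambda}^{-1/2}\bm{U}^T\hat{\bm{v}}$. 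Then $\kappa = \tfrac{1}{2}\sum_{i=1}^{3}\lambda_i y_i^2$, and each $y_i^2$ is independent non-central $\chi^2_1$ with non-centrality parameter equal to the squared $i$-th component of $\bm{\Lambda}^{-1/2}\bm{U}^T\hat{\bm{v}}$, yielding the weighted sum representation promised for (C.3).

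The routine part is the expectation identity; the main obstacle is purely bookkeeping, namely pinning down the exact block structure of $\bm{A}(\bm{x})$ and the induced form of $\Phi(\bm{x},t)$, because the velocity blocks of $\bm{T}$ are themselves diagonal by orthogonality of CPOD modes within a flow variable (see Figure \ref{fig:TT}), while the cross-velocity blocks are dense and contribute the off-diagonal entries of $\Phi$ that distinguish $M_A$ from the independent model $M_0$. Care is also needed in confirming that the Kronecker structure in \eqref{eq:coefpred} propagates correctly through the linear map $\bm{A}(\bm{x})$ so that $\Phi(\bm{x},t)$ separates cleanly into the kriging variance scalar times a spatial quadratic form in the CPOD modes, matching the formula in Appendix C.
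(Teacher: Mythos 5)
Your proposal is correct and follows essentially the same route as the paper's Appendix C: stack the centered velocities into a conditionally Gaussian vector, apply the quadratic-form identity $\mathbb{E}[\bm{v}^T\bm{v}] = \hat{\bm{v}}^T\hat{\bm{v}} + \mathrm{tr}\{\mathrm{Cov}(\bm{v})\}$ for the MMSE predictor, and whiten via the eigendecomposition of the conditional covariance to obtain the weighted non-central $\chi^2$ representation. The only divergence is bookkeeping: the paper's (C.1) defines $\Phi(\bm{x},t)$ as the full conditional covariance of the velocity vector (so the trace term in the displayed predictor should strictly read $\tfrac{1}{2}\mathrm{tr}\{\Phi\}$), whereas you absorb the factor of $\tfrac{1}{2}$ into $\Phi$ to match the stated formula --- an internal inconsistency of the paper rather than a gap in your argument.
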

\noindent In practice, plug-in estimates are used for both time-averaged flows and model parameters.

\begin{figure}[t]
\centering
\includegraphics[width=\textwidth]{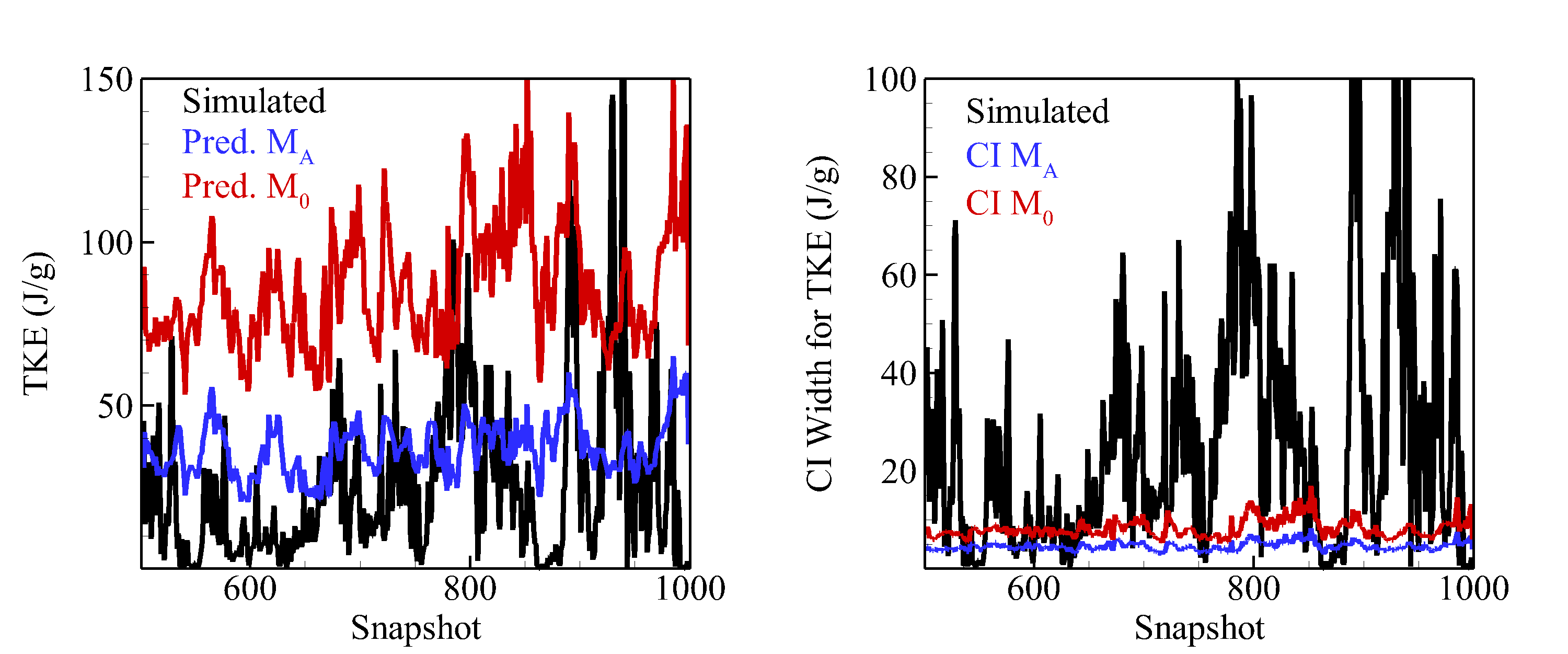}
\caption{Predicted TKE and lower 90\% confidence band for $M_A$ and $M_0$ at probe 8.}
\label{fig:ke}
\end{figure}

With this in hand, we compare the prediction and UQ of TKE from the proposed model $M_A$ and the independent model $M_0$ (see Theorem \ref{thm:uq}) with the simulated TKE at the validation setting. Figure \ref{fig:ke} shows the predicted TKE $\hat{\kappa}(\bm{x},t)$ at probe 8 over the fully-developed time-frame of $t= 15 - 30$ ms, along with the 90\% lower pointwise confidence band constructed using Theorem \ref{thm:kinuq}. Visually, the proposed model $M_A$ provides an improved prediction of the simulated TKE than the independent model $M_0$. As for the confidence bands, the average coverage rate for $M_A$ over the fully-developed time-frame (85.0\%) is much closer to the desired nominal rate of 90\% compared to that for $M_0$ (73.8\%). The proposed model therefore provides a coverage rate closer to the desired nominal rate of 90\%. The poor coverage rate for the independent model is shown in the right plot of Figure \ref{fig:ke}, where the simulated TKE often dips below the lower confidence band. By incorporating prior knowledge of flow couplings, the proposed model can provide improved predictive performance and uncertainty quantification.

\begin{table}[t]
\begin{minipage}{0.48\textwidth}
\centering
\includegraphics[width=\textwidth]{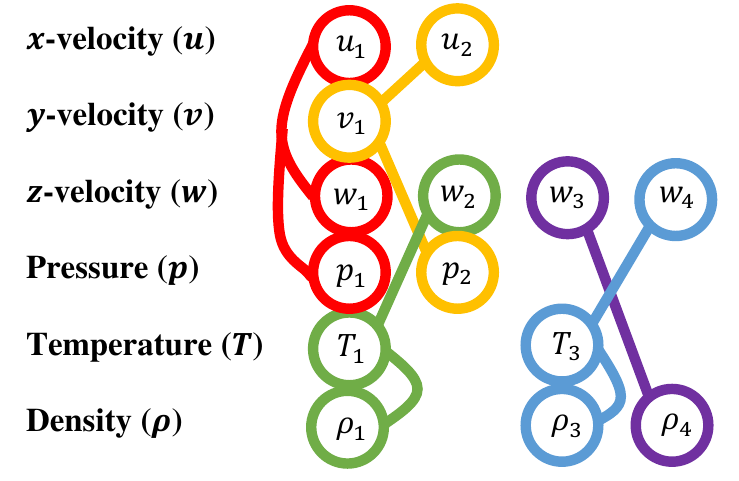}
\captionof{figure}{Graph of selected flow couplings from $\bm{T}$. Nodes represent CPOD modes, and edges represent non-zero correlations.}
\label{fig:T}
\end{minipage}
\hfill
\begin{minipage}{0.48\textwidth}
\begin{tabular}{cc}
\toprule
\text{\bf{Step}} & \text{\bf{Comp. time (mins)}}\\
\toprule
\text{CPOD extraction} & 33.91\\
\text{Parameter estimation} & 11.31\\
\text{Flow prediction} & 20.19\\
\toprule
\text{Total} & 65.41\\
\toprule
\end{tabular}
\caption{Computation time for each step of the proposed emulator, parallelized over 200 processing cores.}
\label{tbl:comptime}
\end{minipage}
\end{table}

\subsection{Correlation extraction}
\label{sec:correx}
Finally, we demonstrate the use of the proposed model as a tool for extracting {common flow couplings} on the design space. Setting the sparsity penalty $\lambda$ so that only the top nine correlations are chosen, Figure \ref{fig:T} shows the corresponding graph of the extracted couplings of CPOD modes. Nodes on this graph represent CPOD modes for each flow variable, with edges indicating the presence of a non-zero correlation between two modes. Each connected subgraph in Figure \ref{fig:T} is interpretable in terms of flow physics. For example, the subgraph connecting $u_1$, $w_1$ and $P_1$ (first modes for $x$-velocity, circumferential velocity and pressure) makes physical sense, because $u_1$ and $w_1$ are inherently coupled by Bernoulli's equation for fluid flow \citep{SS1982}, while $w_1$ and $P_1$ are connected by the centrifugal acceleration induced by circular momentum of LOX flow. Likewise, the subgraph connecting $T_1$, $\rho_1$ and $w_2$ also provides physical insight: $T_1$ and $\rho_1$ are coupled by the equation of state and conservation of energy, while $\rho_1$ and $w_2$ are connected by conservation of momentum.

The interpretability of these extracted flow couplings in terms of fundamental conservation laws from fluid mechanics is not only appealing from a flow physics perspective, but also provides a reassuring check on the estimation of the co-kriging matrix $\bm{T}$. Recall from the discussion in Section \ref{sec:covmat} that an accurate estimate of $\bm{T}$ is needed for the improved predictive guarantees of Theorem \ref{thm:uq} to hold. The consistency of the selected flow couplings (and the ranking of such couplings) with established physical principles provides confidence that the proposed estimation algorithm indeed returns an accurate estimate of $\bm{T}$. These results nicely illustrate the dual purpose of the CPOD matrix $\bm{T}$ in our co-kriging model: not only does it allow for more accurate UQ, it also extracts interesting flow couplings which can guide further experiments.

\subsection{Computation time}

In addition to accurate flow emulation and physics extraction, the primary appeal of the proposed emulator is its efficiency. Table \ref{tbl:comptime} summarizes the computation time required for each step of the emulation process, with timing performed on a parallelized system of 200 Intel Xeon E5-2603 1.80GHz processing cores. Despite the massive training dataset, which requires nearly 100GB of storage space, we see that the proposed model can provide accurate prediction, UQ and coupling extraction in slightly over an hour of computation time. Moreover, because both CPOD extraction and parameter estimation need to be performed only once, the surrogate model can generate flow predictions for hundreds of new settings within a day's time, thereby allowing for the exploration of the full design space in practical turn-around times. Through a careful elicitation and incorporation of flow physics into the surrogate model, we show that an efficient and accurate flow prediction is possible despite a limited number of simulation runs, with the trained model extracting valuable physical insights which can be used to guide further investigations.


\section{Conclusions and future work}\label{sec:concl}

In this paper, a new emulator model is proposed which efficiently predicts turbulent cold-flows for rocket injectors with varying geometries. An important innovation of our work lies in its \textit{elicitation} and \textit{incorporation} of flow properties as model assumptions. First, exploiting the deep connection between POD and turbulent flows \citep{Lum1967}, a novel CPOD decomposition is used for extracting {common} instabilities over the design space. Next, taking advantage of dense temporal resolutions, a {time-independent} emulator is proposed that considers {independent} emulators at each simulation time-step. Lastly, a sparse covariance matrix $\bm{T}$ is employed within the emulator model to account for the few significant couplings among flow variables. Given the complexities inherent in spatio-temporal flows and the massive datasets at hand, such simplifications are paramount for {accurate} flow predictions in {practical} turn-around times. This highlights the need for careful {elicitation} in flow emulation, particularly for engineering applications where the time-consuming nature of simulations limits the number of available runs.

Applying the model to simulation data, the proposed emulator provides accurate flow predictions and captures several key metrics for injector performance. In addition, the proposed model offers two appealing features: (a) it provides a physically meaningful quantification of spatio-temporal uncertainty, and (b) it extracts significant couplings between flow instabilities. A key advantage of our emulator over existing flow kriging methods is that it provides accurate predictions using only  a fraction of the time required by simulation. This efficiency is very appealing for engineers, because it allows them to fully explore the desired design space and make timely decisions.

Looking ahead, we are pursuing several directions for future research. First, while the CPOD expansion appears to work well for cold-flows, the justifying assumption of similar Reynolds numbers does not hold for more complicated (e.g., reacting) turbulent flows. To this end, we are working on ways to incorporate pattern recognition techniques \citep{Fuk2013} into the GP kriging framework to jointly (a) identify common instability structures that scale {non-linearly} over varying geometries, then (b) predict such structures at new geometric settings. The key hurdle is again {computational efficiency}, and the treed GP models in \cite{Tea2012} or the local GP models in \cite{GA2015} and \cite{Sea2016} appear to be attractive options. Next, a new design is proposed recently in \cite{MJ2017} which combines the MaxPro methodology with minimax coverage, and it will be interesting to see whether such designs can provide improved performance. Lastly, to evaluate the stability of new injector geometries, the UQ for the emulated flow needs to be fed forward through an acoustics solver. Since each evaluation of the solver can be time-intensive, this forward uncertainty propagation can be performed more quickly by reducing this UQ to a set of representative points, and the support points in \cite{MJ2016} can prove to be useful for conducting such a task.\\

\if1\blind{\noindent \textbf{Acknowledgements}: The authors gratefully acknowledge helpful advice from the associate editor, two anonymous referees and Dr. Mitat A. Birkan. This work was sponsored partly by the Air Force Office of Scientific Research under Grant No. FA 9550-10-1-0179, and partly by the William R. T. Oakes Endowment of Georgia Institute of Technology. Wu's work is partially supported by NSF DMS 1564438.}
\fi

\newpage
\begin{appendices}

\numberwithin{equation}{section}
\counterwithin{figure}{section} 
\counterwithin{table}{section} 
\setcounter{page}{1}

\section{Computing the CPOD expansion}
\label{sec:CPOD}
The driving idea behind CPOD is that a common spatial domain is needed to extract common instabilities over multiple injector geometries, since each simulation run has different geometries and varying grid points. We first describe a physically justifiable method for obtaining such a common domain, and then use this to compute the CPOD expansion.

\subsection{Common grid}\label{sec:commongrid}


\begin{figure}[t]
\centering
\includegraphics[width=\textwidth]{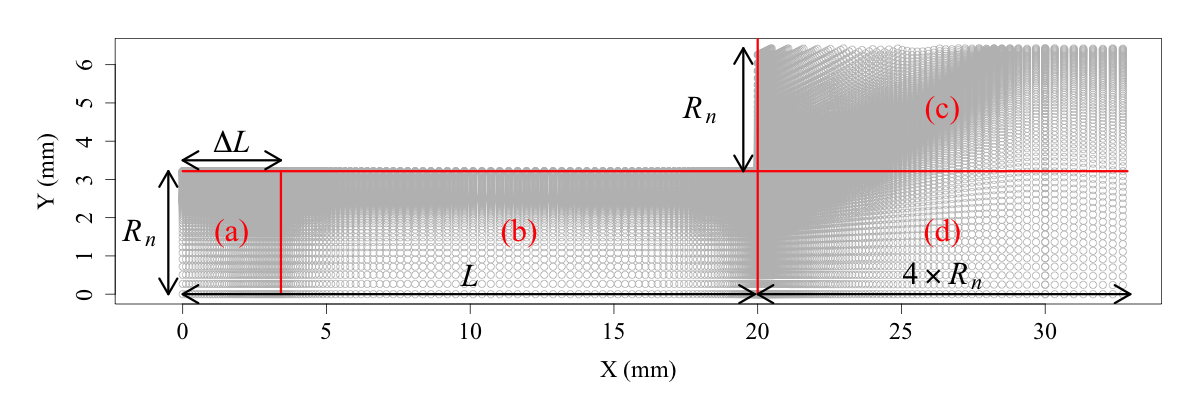}
\caption{Partition of the spatial grid for the first simulation case.}
\label{fig:FourParts}
\end{figure}

\ben
\item Identify the densest grid (i.e., with the most grid points) among the $n$ simulation runs, and set this as the common reference grid.
\item For each simulation, partition the grid into the following four parts: (a) from injector head-end to the inlet, (b) from the inlet to the nozzle exit, (c) the top portion of the downstream region and (d) the bottom portion of the downstream region (see Figure \ref{fig:FourParts} for an illustration). This splits the flow in such a way that the linearity assumption can be physically justified.
\item Linearly rescale each part of the partition to the common grid by the corresponding geometry parameters $L$, $R_n$ and $\Delta L$ (see Figure \ref{fig:FourParts}).
\item For each simulation, interpolate the original flow data onto the spatial grid of the common geometry. This step ensures the flow is realized over a common set of grid points for all $n$ simulations. In our implementation, the \textit{inverse distance weighting} interpolation method \citep{She1968} is used with $10$ nearest neighbours.
%
\een

\subsection{POD expansion}
After flows from each simulation have been rescaled onto the common grid, the original POD expansion can be used to extract common flow instabilities. Let $\{\bm{x}_j\}_{j=1}^J$ and $\{t_m\}_{m=1}^T$ denote the set of common grid points and simulated time-steps, respectively, and let $\tilde{Y}(\bm{x},t;\bm{c}_i)$ be an interpolated flow variable for geometric setting $\bm{c}_i$, $i=1, \cdots, n$ (for brevity, assume a single flow variable, e.g., $x$-velocity, for the exposition below). The CPOD expansion can be computed using the following three steps.

\ben
\item For notational convenience, we combine all combinations of geometries and time-steps into a single index. Set $N=nT$ and let $l = 1, \cdots, N$ index all combinations of $n$ design settings and $T$ time-steps, and let $\tilde{Y}_l(\bm{x}) \equiv \tilde{Y}(\bm{x},(t,\bm{c})_l)$. Define $\bm{Q} \in \mathbb{R}^{N \times N}$ as the following inner-product matrix:
\[\bm{Q}_{l,m} = \sum_{j=1}^J \tilde{Y}_l (\bm{x}_j) \tilde{Y}_m (\bm{x}_j).\]
Such an inner-product is possible because all $n$ simulated flows are observed on a set of \textit{common} gridpoints set.

First, compute the eigenvectors $\bm{a}_k \in \mathbb{R}^{N}$ satisfying:
\[ \bm{Q} \bm{a}_k = \lambda_k \bm{a}_k,\]
where $\lambda_k$ is the $k$-th largest eigenvalue of $\bm{Q}$. Since a full eigendecomposition requires $O(N^3)$ work, this step may be intractible to perform when the temporal resolution is dense. To this end, we employed a variant of the implicitly restarted Arnoldi method \citep{LS1998}, which can efficiently approximate {leading} eigenvalues and eigenvectors.

\item Compute the $k$-th mode $\phi_k(\bm{x})$ as:
\[
\begin{bmatrix}
\phi_k(\bm{x}_1) \\
\phi_k(\bm{x}_2) \\
\vdots \\
\phi_k(\bm{x}_J)
\end{bmatrix}
 = 
\begin{pmatrix}
\tilde{Y}_1 (\bm{x}_1) & \cdots & \tilde{Y}_{N} (\bm{x}_1)\\
\vdots & \ddots & \vdots \\
\tilde{Y}_1 (\bm{x}_J) & \cdots & \tilde{Y}_{N} (\bm{x}_J)
\end{pmatrix}
\bm{a}_k.
\]
To ensure orthonormality, apply the following normalization:
\[\phi_k(\bm{x}_j) :=  \frac{\phi_k(\bm{x}_j)}{\|\phi_k(\bm{x})\|}, \quad \|\phi_k(\bm{x})\| = \sqrt{\sum_{j=1}^J \phi_k(\bm{x}_j)^2}\]

\item Lastly, derive the CPOD coefficients $(\beta_{l,1}, \cdots, \beta_{l,N})^T$ for the snapshot at index $l$ (i.e., with design setting and time-step $(\bm{c},t)_l$) as:
\[
\begin{bmatrix}
\beta_{l,1} \\
\beta_{l,2} \\
\vdots \\
\beta_{l,N}
\end{bmatrix}
 = 
\begin{pmatrix}
\phi_1(\bm{x}_1) & \cdots & \phi_1(\bm{x}_J)\\
\vdots & \ddots & \vdots \\
\phi_N(\bm{x}_1) & \cdots & \phi_N(\bm{x}_J)
\end{pmatrix}
\begin{bmatrix}
\tilde{Y}_l (\bm{x}_1)\\
\tilde{Y}_l (\bm{x}_2)\\
\vdots\\
\tilde{Y}_l (\bm{x}_J)
\end{bmatrix}.
\]
Using these coefficients and a truncation at $K_r < N$ modes, it is easy to show the following decomposition of the flow at the design setting $\bm{c}_i$ and time-step $t_m$ indexed by $l$:
\[
Y(\bm{x}_j,t_m;\bm{c}_i)\approx\sum_{k=1}^{K_r} \beta_{l,k}\mathcal{M}_i\{\phi_k(\bm{x}_j)\},\quad j = 1, \cdots , J,
\]
as asserted in (3).
\een

\section{Proof of Theorem 2}
\label{sec:pf2}
Define the map $A: \mathbb{R}^K \times \mathbb{R}^{K \times K} \times \mathbb{R}^p \rightarrow \mathbb{R}^K \times \mathbb{R}^{K \times K} \times \mathbb{R}^p$ as a single-loop of the graphical LASSO operator for optimizing $\bm{T}$ with $\boldsymbol{\mu}$ and $\boldsymbol{\tau}$ fixed, and define $B:\mathbb{R}^K \times \mathbb{R}^{K \times K} \times \mathbb{R}^p \rightarrow \mathbb{R}^K \times \mathbb{R}^{K \times K} \times \mathbb{R}^p$ as the L-BFGS map for a single line-search when optimizing $\boldsymbol{\mu}$ and $\boldsymbol{\tau}$ with $\bm{T}$ fixed. Each BCD cycle in Algorithm 1 then follows the map composition $S = A^M \circ B^N$, where $M < \infty$ and $N < \infty$ are the iteration count for the graphical LASSO operator and number of line-searches, respectively. The parameter estimates at iteration $m$ of the BCD cycle can then be given by:
\[\Theta_{m+1} = S(\Theta_m), \quad \text{where} \; \Theta_m = (\boldsymbol{\mu}_m,\bm{T}_m,\boldsymbol{\tau}_m).\]

Define the set of stationary solutions as $\Gamma = \{\Theta \; : \; \nabla l_{\lambda}(\Theta) = \bm{0}\}$, where $\nabla l_\lambda$ is the gradient of the negative log-likelihood $l_\lambda$. Using the Global Convergence Theorem (see Section 7.7 of \citealp{LY2008}), we can prove stationary convergence:
\[\lim_{m\rightarrow \infty}\Theta_m = \Theta^* \in \Gamma,\]
if the following three conditions hold:
\begin{enumerate}[label=(\roman*)]
\item $\{\Theta_m\}_{m=1}^\infty$ is contained within a compact subset of $\mathbb{R}^K \times \mathbb{R}^{K \times K} \times \mathbb{R}^p$,
\item $l_\lambda$ is a continuous descent function on $\Gamma$ under map $S$,
\item $S$ is closed for points outside of $\Gamma$.
\end{enumerate}
We will verify these conditions below.
\ben[label=(\roman*)]
\item This is easily verified by the fact that $|\boldsymbol{\mu}_m| \leq \left( \max_{i,r,k} |\beta_k^{(r)}(\bm{c}_i)| \right) \bm{1}_K$, $\bm{0} \preceq \bm{T}_m \preceq \left( \max_{k,r} s^2\{\beta_k^{(r)}(\bm{c}_i)\}_{i=1}^n \right) \bm{I}_K$ and $\boldsymbol{\tau}_m \in [0,1]^p$, where $s^2\{ \cdot \}$ returns the sample standard deviation for a set of scalars.
\item To prove that $S$ is a descent function, we need to show that if $\Theta \in \Gamma$, then $l_\lambda\{S(\Theta)\} = l_\lambda\{\Theta\}$, and if $\Theta \notin \Gamma$, then $l_\lambda\{S(\Theta)\} < l_\lambda\{\Theta\}$. The first condition is trivial, since $M=0$ and $N=0$ when $\Theta$ is stationary. The second condition follows from the fact that the maps $A$ and $B$ incur a strict decrease in $l_\lambda$ whenever $\bm{T}$ and $(\boldsymbol{\mu},\boldsymbol{\tau})$ are non-stationary, respectively. 
\item Note that $A^M$ is a continuous map (since the graphical LASSO map is a continuous operator) and the line-search map $B^N$ is also continuous. Since $S = A^M \circ B^N$, it must be continuous as well, from which the closedness of $S$ follows. 
\een

\section{Proof of Theorem 3}
\label{sec:kin}
Fix some spatial coordinate $\bm{x}$ and time-step $t$, and let:
\[\bm{y} = (Y^{(u)}(\bm{x},t;\bm{c}_{new}), Y^{(v)}(\bm{x},t;\bm{c}_{new}), Y^{(w)}(\bm{x},t;\bm{c}_{new}))^T\] be the true simulated flows for $x$-, $y$- and circumferential velocities at the new setting $\bm{c}_{new}$,
\[\hat{\bm{y}} = (\hat{Y}^{(u)}(\bm{x},t;\bm{c}_{new}), \hat{Y}^{(v)}(\bm{x},t;\bm{c}_{new}), \hat{Y}^{(w)}(\bm{x},t;\bm{c}_{new}))^T\]
be its corresponding prediction from (9), and
\[\bar{\bm{y}} = (\bar{Y}^{(u)}(\bm{x};\bm{c}_{new}), \bar{Y}^{(v)}(\bm{x};\bm{c}_{new}), \bar{Y}^{(w)}(\bm{x};\bm{c}_{new}))^T\]
be its time-averaged flow. It is easy to verify that, given the simulation data $\mathcal{D} = \{Y^{(r)}(\bm{x},t;\bm{c}_i)\}$, the conditional distribution of $\bm{y}|\mathcal{D}$ is $\mathcal{N}(\hat{\bm{y}},\Phi(\bm{x},t))$, where:
\begin{equation}
\Phi(\bm{x},t) \equiv
\left[\begin{array}{ccc}
\bm{m}^{(u)} & 0 & 0\\
 0 & \bm{m}^{(v)} & 0\\
 0 & 0 & \bm{m}^{(w)}
 \end{array}\right]
\left[ \mathbb{V}\{{\boldsymbol{\beta}}(t;\bm{c}_{new})| \{\boldsymbol{\beta}(t;\bm{c}_i)\}^n_{i=1}\} \right]_{uvw}\left[\begin{array}{ccc}
 \bm{m}^{(u)} & 0 & 0\\
 0 & \bm{m}^{(v)} & 0\\
 0 & 0 & \bm{m}^{(w)}
 \end{array}\right]^T,
\label{eq:phi}
\end{equation}
with:
\[
\bm{m}^{(r)}=\left[\begin{array}{cccc}
\mathcal{M}_{new} \{\phi^{(r)}_1(\bm{x})\}, & \mathcal{M}_{new} \{\phi^{(r)}_2(\bm{x})\}, & \cdots & \mathcal{M}_{new} \{\phi^{(r)}_{K_r}(\bm{x})\}
\end{array}\right], \quad r=u,v,w. 
\]

Letting $\Phi(t) = \bm{U}\Lambda\bm{U}^T$ be the eigendecomposition of $\Phi(t)$, with $\Lambda = \text{diag}\{\lambda_j\}$, it follows that $\Lambda^{-1/2}\bm{U}^T(\bm{y}-\bar{\bm{y}})|\mathcal{D} \stackrel{d}{=} \mathcal{N}(\boldsymbol{\mu},\bm{I}_K)$, where $\boldsymbol{\mu}=\Lambda^{-1/2}\bm{U}^T(\hat{\bm{y}}-\bar{\bm{y}})$ and $K=K_u+K_v+K_w$. Denoting $\bm{a}=\Lambda^{-1/2}\bm{U}^T(\bm{y}-\bar{\bm{y}})$, the TKE expression in (13) can be rewritten as: 
\begin{align}
\begin{split}
\kappa(\bm{x},t) &= \frac{1}{2}(\bm{y}-\bar{\bm{y}})^T(\bm{y}-\bar{\bm{y}})= \frac{1}{2}(\bm{U}\Lambda^{1/2}\bm{a})^T(\bm{U}\Lambda^{1/2}\bm{a})\\
&= \frac{1}{2}(\bm{a}^T\Lambda^{1/2}\bm{U}^T\bm{U}\Lambda^{1/2}\bm{a})\\
&=\frac{1}{2}\bm{a}^T\Lambda\bm{a}=\frac{1}{2}\sum^K_{j=1}\lambda_ja^2_j.
\end{split}
\end{align}
Since $\bm{a}\sim \mathcal{N}(\boldsymbol{\mu},\bm{I}_K)$, $a^2_j$ has a non-central chi-square distribution with one degree-of-freedom and non-centrality parameter $\mu_j^2$ (we denote this as $\chi^2_1(\mu_j^2)$). $\kappa(\bm{x},t)$ then becomes:
\begin{equation}
\sum^K_{j=1}\frac{\lambda_j}{2}\chi^2_1(\mu^2_j),
\label{eq:dist}
\end{equation}
which is a sum of weighted non-central chi-squared distributions. The computation of the distribution function for such a random variable has been studied extensively, see, e.g., \cite{imhof1961computing}, \cite{davies1973numerical,davies1980algorithm}, \cite{castano2005distribution}, and \cite{liu2009new}, and we appeal to these methods for computing the pointwise confidence interval of $\kappa(\bm{x},t)$ in Section 4. Specifically, we employ the method of \cite{liu2009new} through the \textsf{R} \citep{Rcite} package \texttt{CompQuadForm} \citep{CompQuadForm}.

\end{appendices}

\newpage
\bibliography{references}


\end{document}